\documentclass[11pt,letterpaper]{article}
\usepackage{fullpage}
\usepackage[hidelinks]{hyperref}
\usepackage{graphicx}
\usepackage{preamble}
\usepackage{pgfplots}
\usepackage{natbib}
\usepackage{url}
\pgfplotsset{compat=1.18}

\title{The Distortion of Prior-Independent $b$-Matching Mechanisms}

\author{Ioannis Caragiannis\thanks{Department of Computer Science, Aarhus University, {\AA}bogade 34, 8200 Aarhus N, Denmark. Email: \protect\url{iannis@cs.au.dk}} \and Vasilis Gkatzelis\thanks{Department of Computer Science, Drexel University, 3675 Market Street, Philadelphia, PA 19104, USA. Email: \protect\url{gkatz@drexel.edu}} \and Sebastian Homrighausen\thanks{Department of Computer Science, Aarhus University, {\AA}bogade 34, 8200 Aarhus N, Denmark. Email: \protect\url{homrighausen@cs.au.dk}}}

\date{}

\begin{document}

    \maketitle

\begin{abstract}
In a setting where $m$ items need to be partitioned among $n$ agents, we evaluate the performance of mechanisms that take as input each agent's \emph{ordinal preferences}, i.e., their ranking of the items from most- to least-preferred. The standard measure for evaluating ordinal mechanisms is the \emph{distortion}, and the vast majority of the literature on distortion has focused on worst-case analysis, leading to some overly pessimistic results. We instead evaluate the distortion of mechanisms with respect to their expected performance when the agents' preferences are generated stochastically. We first show that no ordinal mechanism can achieve a distortion better than $e/(e-1)\approx 1.582$, even if each agent needs to receive exactly one item (i.e., $m=n$) and every agent's values for different items are drawn i.i.d.\ from the same known distribution. We then complement this negative result by proposing an ordinal mechanism that achieves the optimal distortion of $e/(e-1)$ even if each agent's values are drawn from an agent-specific distribution that is unknown to the mechanism. To further refine our analysis, we also optimize the \emph{distortion gap}, i.e., the extent to which an ordinal mechanism approximates the optimal distortion possible for the instance at hand, and we propose a mechanism with a near-optimal distortion gap of $1.076$. Finally, we also evaluate the distortion and distortion gap of simple mechanisms that have a one-pass structure. 
\end{abstract}

\section{Introduction}

We consider a class of $\mathbf{b}$-matching problems where $n$ agents compete over $m\geq n$ indivisible items and each agent $i$ has a quota $b_i$ (the maximum number of items they can receive), such that $\sum_{i=1}^n b_i =m$. This framework generalizes the classic one-to-one matching problem, where $b_i=1$ for all $i$ \citep{hylland1979efficient,zhou1990conjecture,bogomolnaia2001new}, and captures a wide range of real-world applications, including the allocation of computational resources~\citep{ComputeAllocation09,Karma}, course registration~\citep{CourseAssignmentHarvard,CourseAssignmentUPenn}, reviewer assignments~\citep{TTT10,CZ13,LSM14}, as well as college admissions and school choice \citep{gale1962college,abdulkadirouglu2005boston,abdulkadirouglu2005new}. 

To reach a desirable allocation, the matching mechanism must account for the preferences of each agent $i$, typically modeled by a valuation vector $(v_{i1}, v_{i2}, \dots, v_{im})$, where $v_{ij}$ represents the value that $i$ has for each item $j$. However, most practical mechanisms are \emph{ordinal}: rather than eliciting each agent's valuation vector, they elicit only their ranking of the items from most-preferred to least-preferred (their ordinal preferences). Since rankings do not capture the intensity of the preferences, ordinal mechanisms generally fail to maximize standard objectives like social welfare (the total value of the agents). The \emph{distortion} of a mechanism quantifies this loss by measuring its ability to approximate such objectives under worst-case instances (see \cite{AFSV21} for a survey of prior work on distortion).

A key limitation of the distortion literature is its reliance on adversarial analysis, which often yields overly pessimistic bounds. For example, the optimal distortion for one-to-one matching with respect to the social welfare is $\Theta(n^2)$ for deterministic mechanisms~\citep{ABFV22} and $\Theta(\sqrt{n})$ for randomized ones~\citep{RFZ14}, even if the valuations are normalized so that $\sum_j v_{ij}=1$. Seeking more practical guarantees, some recent work has turned to \emph{average case analysis} instead: \citet{DGZ22} showed that the Random Priority mechanism achieves a constant \emph{average distortion} for one-to-one matching when each $v_{ij}$ is drawn from a uniform distribution, and \citet{GZ19} extended this result by allowing each $v_{ij}$ to be drawn i.i.d.\ from any given distribution ${F}$. However, in practice, these distributions are often heterogeneous and unknown. We bridge this gap by focusing on the design and analysis of \emph{prior-independent} mechanisms, i.e., mechanisms that are agnostic to these distributions, and evaluating their performance over instances generated by \emph{agent-specific distributions}. In particular, we study the following question:

\begin{center}
\emph{To what extent can ordinal prior-independent mechanisms approximate the optimal social\\welfare when the valuation vectors are drawn from agent-specific distributions?}    
\end{center}

\subsection{Our Results}
Rather than assuming that each $v_{ij}$ is drawn i.i.d.\ from a shared distribution ${F}$, we consider agent-specific distributions. In fact, we let each agent $i$'s valuation vector $(v_{i1}, v_{i2}, \dots, v_{im})$ be drawn simultaneously from $F_i$, allowing the random values in this vector to be highly correlated, and all we require is that each subset of $b_i$ items is equally likely to be $i$'s ``favorite'' bundle. This is clearly satisfied by independent draws and any exchangeable distribution, since they are symmetric with respect to items, but the support of our distributions can be highly non-symmetric. 
Given only the agents' ordinal preferences as input and without any information regarding the distributions that generated them, the mechanism needs to choose an allocation. We aim to design mechanisms with low distortion, i.e., with an expected social welfare that closely approximates the optimal social welfare even if the agent quotas and the agent-specific distributions are chosen adversarially.

\paragraph{\textbf{Optimal distortion bounds (Section~\ref{sec:distortion}).}} We first focus on the special case of one-to-one matching and show that no ordinal mechanism can achieve a distortion better than $e/(e-1)\approx 1.582$. This impossibility holds even if all the values are drawn i.i.d.\ from a \emph{shared} Bernoulli distribution ${F}$ that is \emph{known} to the mechanism. Then, for the general setting, we propose the \emph{Random Survivors} (RS) mechanism. Our main result in this section shows that for any quota vector $(b_1, b_2, \dots, b_n)$ and any distribution profile $\mathcal{F}=(F_1, F_2, \dots, F_n)$, the RS mechanism guarantees a distortion of $e/(e-1)$, matching the impossibility result despite the value heterogeneity and its prior-independence.

The RS mechanism asks each agent $i$ to report only the (unordered) set of their top $b_i$ items (their ``favorite'' items), and it guarantees that every agent will receive each of these items with probability at least $1-1/e$. It, therefore, requires significantly less information than the agents' full rankings, and it ensures that every agent's expected value closely approximates even their favorite bundle value.\footnote{In most cases, it is infeasible for every agent to simultaneously receive all their favorite items, due to conflicts.} 
To achieve this guarantee, the mechanism carefully manages the relative priorities among agents with different quotas. Specifically, it starts by selecting a random subset of the agents, called the ``survivors,'' and giving them higher priority. Each agent $i$ joins the survivors with probability that is a decreasing function of $b_i$ to ensure that agents with lower quotas are not overwhelmed by those with higher quotas. Then, for each item $g$, the mechanism identifies the set of survivors who reported $g$ as one of their favorite items and allocates it to one of them selected uniformly at random.

\paragraph{\textbf{Distortion gap (Section~\ref{sec:dist_gap}).}} Aiming for a more refined analysis, we evaluate the mechanism's performance as a function of the quota vector $\mathbf{b}$. Specifically, we define the \emph{distortion gap} of a mechanism ${\mathcal{M}}$ as the worst-case ratio across all quota vectors $\mathbf{b}$ of the distortion achieved by ${\mathcal{M}}$ on instances induced by $\mathbf{b}$ over the best possible distortion on those instances. We first observe that even though the distortion of the RS mechanism is optimal for the one-to-one matching case ($\mathbf{b} = \mathbf{1}$), its distortion gap is at least $1.5$, i.e., there exists some $\mathbf{b}$ such that a distortion significantly better than $e/(e-1)$ is possible within the instances induced by $\mathbf{b}$. To achieve a smaller distortion gap, we propose the RSBS mechanism, which modifies RS by adding a ``burning'' phase and a ``stealing'' phase. In its first phase, RSBS sets aside the agent with the largest quota and runs the RS mechanism. In the second phase, each agent is ``burnt'' with some probability, causing them to lose any items they received in phase 1. Finally, in phase 3, the agent with the largest quota is provided with the opportunity to ``steal'' all of their favorite items. This mechanism achieves an almost optimal distortion gap of $1.076$ for any distribution profile, thus achieving near-optimal distortion for any given quota profile. In fact, we show that it also maintains the optimal $e/(e-1)$ distortion bound of the RS mechanism, achieving the ``best of both worlds.''

\paragraph{\textbf{Sequential one-pass mechanisms (Section~\ref{sec:sequential}).}} 
We then consider even more demanding settings where the agents are approached sequentially, and their assignment needs to be decided irrevocably before the preferences of subsequent agents are revealed. We show that as long as the agent $i$ with the largest quota is approached last, there exists a mechanism (the HQL mechanism) that simultaneously achieves a distortion $2$ and a distortion gap of $2 (1-1/e)\approx 1.264$. We complement this positive result by showing that no sequential mechanism can achieve a distortion better than $2$ or a distortion gap better than $2 (1-1/e)\approx 1.264$ even if it has full control over the order in which the agents are approached, so the bounds achieved by HQL are optimal. We finally also consider the ``secretary model,'' where the order in which the agents are considered is chosen uniformly at random, and we observe that the RS mechanism from Section~\ref{sec:distortion} can be adapted to this model, maintaining the same distortion of $e/(e-1)\approx 1.582$.\footnote{This is in contrast to the RSBS mechanism, which cannot be implemented in the secretary model.} On the other hand, we show that no ordinal mechanism can achieve a distortion gap better than $4/3$ in the secretary model.

\paragraph{\textbf{Additional observations (Section~\ref{sec:conclusion})}}
Although our main focus is on distortion rather than incentives, we observe that all our mechanisms satisfy Bayesian incentive compatibility (BIC) for any distribution that we consider, thus incentivizing the agents to report truthfully. Furthermore, we observe that all our distortion and distortion gap guarantees apply not only to the class of additive valuations, but also to submodular valuation functions, as long as each set of $b_i$ items is equally likely to be the one that maximizes the randomly generated submodular valuation function.

\subsection{Related Work}
Starting from the work of \citet{PR06} and \citet{BCHLPS15}, a lot of prior work has focused on the distortion measure in a wide variety of different settings (see \cite{AFSV21} for a survey of some of these results). 

The utility maximization variant of one-to-one matching problems has been mostly studied in settings where the agents are assumed to have normalized values over the items (the values of each agent over the set of items sum up to $1$). The best possible distortion for deterministic mechanisms has been identified as $\Theta(n^2)$~\citep{ABFV22} and for randomized mechanisms as $\Theta(\sqrt{n})$~\citep{RFZ14}. The distortion of one-to-one matching and some generalizations has also been studied in settings where the mechanism can elicit more information via queries~ \citep{ABFV21,ABFV22,ABFV24,ma2021matching,LV25,ebadian2025bit}.
An array of publications studied the metric distortion of related matching problems where the agents and items lie in some metric space which induces \emph{metric utilities} for the agents and the goal is to maximize these utilities \citep{Anshelevich2019bipartitematching,Anshelevich2016truthful,anshelevich2016blind,anshelevich2017randomized,ABEPS18}.
Generalizations of matching problems like the ones we consider in this paper have also been studied in the distortion literature, where the mechanism is allowed to also use some value queries~\citep{ABFV21,ABFV22,ABFV24} or request approval information from the agents~\citep{LV25}.

Beyond the scope of matching, more general graph problems \citep{AA18} as well as voting \citep{BNPS21} have been the object of study in the distortion literature.
Moreover, \citet{MPSW19,MSW20} consider the trade-offs between communication and distortion in voting problems.
Another closely related work in voting by \citet{CF24} focused on impartial culture electorates where the preferences of the agents over candidates is generated uniformly at random.

The distortion of matching problems has also received a lot of attention for cost minimization where the costs correspond to distances in some metric space. In \emph{metric matching}, agents and items are embedded in a metric space, and an agent's cost in a matching is their distance from their match. The distortion of one-to-one metric matching was first studied with respect to the utilitarian social cost (the total cost) leading to distortion upper bounds of $n$ for randomized algorithms~\citep{caragiannis2024augmentation} and $O(n^2)$ for deterministic ones~\citep{anari2023matching}, and a distortion lower bound of $\Omega(\log n)$ for both~\citep{anari2023matching}. On the other hand, the optimal distortion of metric matching on the line was shown to be equal to $3$ \citep{FGLRV25}. 
The distortion of metric matching has also recently been analyzed with respect to the egalitarian social cost objective \citep{FGLRV25,hastings2025fairmetricdistortionmatching}.

Recognizing the limitations of worst-case analysis, prior work in the distortion literature focused on the smoothed and average case analysis of the Random Priority mechanism (also known as Random Serial Dictatorship) from \citet{DGZ22} and \citet{GZ19}.
More recently, \citet{LVY26} utilized smooth analysis for the online version of the metric matching problem.
Also, \citet{GKPSZ23} recognized the significance of prior-independent mechanisms and evaluated the expected distortion of such mechanisms in the context of voting.

\section{Preliminaries}

We consider one-to-many matching settings involving $n$ {\em agents} and $m$ {\em items}. We identify agents and items by positive integers and use the sets $[n]=\{1, 2, ..., n\}$ and $[m]=\{1, 2, ..., m\}$ to refer to the sets of agents and items, respectively. Each agent has a positive integer 
\emph{quota} $b_i$, and a {\em matching} is an assignment of items to agents so that each agent $i$ gets up to $b_i$ items and each item is assigned to at most one agent. We assume that $m=\sum_{i=1}^n{b_i}$ so that a matching in which all quotas are satisfied (i.e., each agent $i\in [n]$ is assigned exactly $b_i$ items) is feasible. We denote by $\bb=(b_1, b_2, ..., b_n)$ the quota vector and often use the term $\bb$-matching instance to refer to input instances of our setting. 

Each agent $i\in [n]$ has a non-negative cardinal {\em value} $v_{ij}$ for item $j\in [m]$ and an additive valuation $v_i(S)=\sum_{j\in S} v_{ij}$ for each bundle of items $S\subseteq [m]$.\footnote{In Section~\ref{sec:conclusion}, we discuss how we can relax the additivity assumption and allow for submodular valuations.} We use the $n\times m$ matrix $V=\{v_{ij}\}_{i\in [n], j\in [m]}$ to denote the {\em valuation profile} which consists of the agent values for the items. Matchings are evaluated in terms of their {\em social welfare}. Consider a $\bb$-matching instance and a matching $M$ which assigns the set of items $M_i$ to agent $i\in [n]$. The social welfare of $M$ is simply the total value the agents have for the items they get in $M$, i.e., 
$$\text{SW}(M,V)=\sum_{i\in [n]}{v_i(M_i)} = \sum_{i\in [n]}{\sum_{j\in M_i}{v_{ij}}}.$$ 
We denote by $\text{OPT}(V,\bb)$ the optimal social welfare, i.e., the maximum social welfare over all matchings of the $\bb$-matching instance with valuation profile $V$.

We assume that the valuations of each agent $i$ are drawn from an agent-specific distribution $F_i$, and $\mathcal{F}=(F_1, ..., F_n)$ denotes the agent distribution profile. In its simplest form, the distribution $F_i$ returns a random scalar and agent $i$ draws her valuation $v_{ij}$ for each item $j\in [m]$ independently from $F_i$.
However, our positive results hold even if the whole valuation vector $(v_{i1}, v_{i2},\dots, v_{im})$ of each agent $i$ is drawn simultaneously from $F_i$, allowing for significant correlation across its values, as long as each bundle of $b_i$ items is equally likely to be $i$'s ``favorite'' bundle, i.e., the $b_i$ items that $i$ values the most; we refer to such distributions as {\em unbiased-favorites} (UF). Note that the UF condition is clearly satisfied by any distribution that is \emph{exchangeable}, i.e., symmetric with respect to the items, but the support of UF distributions can be a highly non-symmetric set of valuation vectors.

We consider {\em ordinal matching mechanisms} that do not have access to valuation profiles but instead work with ordinal preference profiles defined as follows. The realized valuations of each agent $i\in [n]$ induce a ranking $\succ_i$ over the items, from most-preferred to least-preferred, i.e., $v_{ij_1}>v_{ij_2}$ implies $j_1\succ_i j_2$. In case agent $i$ has the same value for each item in a set $S$, the tie is resolved randomly, and the ordinal preference of agent $i$ has the items in $S$ ranked in a uniformly random order according to $\succ_i$. We denote by $P(V)$ the random ordinal preference profile that corresponds to the valuation profile $V$. A matching mechanism $\mathcal{M}$ takes as input an ordinal preference profile and returns a matching.

The distortion of a mechanism 
on a $\bb$-matching instance (with respect to the expected welfare) is defined as
\[\text{dist}\left(\mathcal{M},\bb\right) = \sup_{\mathcal{F}}{\frac{\E_{V\sim \mathcal{F}}\left[\text{OPT}(V,\bb)\right]}{\E_{V\sim \mathcal{F}}{\left[\text{SW}(\mathcal{M}(P(V),\bb),V)\right]}}},\]
i.e., as the worst-case ratio over all sets of UF distributions of the expected optimal social welfare over the expected social welfare of the mechanism. The expectations are taken considering the randomness in the selection of the valuation profile and the random tie-breaking in the formation of the ordinal preference profile, as well as possible randomization that is used by mechanism $\mathcal{M}$.
Then, with some abuse in notation, the distortion of mechanism $\mathcal{M}$ is just $$\text{dist}(\mathcal{M})=\sup_{\bb}{\text{dist}(\mathcal{M},\bb)}.$$

By definition, distortion is a pessimistic quantification of efficiency. As we shall see later, the distortion of any matching mechanism is at least $\frac{e}{e-1}$, just because no mechanism can achieve a better distortion on one-to-one matching instances. To obtain more informative results that take the inherent difficulty of different matching instances into account, we introduce the {\em distortion gap}, which compares the distortion of a matching mechanism on each instance to the best distortion that can be achieved on that instance. Specifically, let $\mathcal{M}^*$ be a matching mechanism that achieves the best possible distortion on every matching instance. Then, the distortion gap of the matching mechanism $\mathcal{M}$ is defined as
\begin{align*}
    \text{distgap}(\mathcal{M}) &= \sup_{\bb}{\frac{\text{dist}(\mathcal{M},\bb)}{\text{dist}(\mathcal{M}^*,\bb)}}.
\end{align*}
Clearly, since $\text{dist}(\mathcal{M}^*,\bb)\geq 1$, the distortion of a mechanism is a trivial upper bound on its distortion gap, i.e., $\text{distgap}(\mathcal{M}) \leq \sup_{\bb}{\text{dist}(\mathcal{M},\bb)}=\text{dist}(\mathcal{M})$.

\section{Tight Distortion Bounds}\label{sec:distortion}

In this section, we resolve the distortion of the matching problems that we consider in a strong sense. First, we prove that no ordinal mechanism achieves a distortion better than $e/(e-1)$ on one-to-one matching instances even if every agent's value for each item is drawn independently from the same known univariate distribution. Then, we provide a novel ordinal mechanism that achieves a distortion of $e/(e-1)$ for any quota vector and even if the values of each agent $i$ are drawn from an arbitrary agent-specific and unknown UF distribution $F_i$.

\subsection{A Lower Bound for One-to-One Matching}\label{subsec:lower-bound}
We begin with a lower bound that applies even to one-to-one matching instances. 

\begin{theorem}\label{thm:distortion-lower-bound}
    There exists a univariate distribution $F$ such that even if every agent's values for the items are drawn i.i.d.\ from $F$, no ordinal mechanism\footnote{We remark that all our lower bounds hold for mechanisms that may use randomization.} can achieve a distortion better than $\frac{e}{e-1}$, even if it knows $F$ in advance.
\end{theorem}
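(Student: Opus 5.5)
We will use a Bernoulli distribution with a vanishing success probability, so that each agent likes only a few items and these liked items are almost never common. Concretely, take $m=n$ and let $F$ be Bernoulli with parameter $p=c/n$ for a constant $c>0$ that we will later let tend to $0$. Every agent $i$ has value $1$ for a random set $A_i$ of items (each item in it independently with probability $p$) and value $0$ for all others. Ties are broken uniformly at random, so $\succ_i$ lists the items of $A_i$ first, in random order, followed by the remaining items in random order.

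For the mechanism, the only information in $\succ_i$ is a uniformly random ranking in which $A_i$ forms a prefix of unknown length. The crucial point is that, given $\succ_i$, the posterior probability that a particular item is valued is the same for every agent. Whether agent $i$ values its top-ranked item is determined only by whether $A_i\neq\emptyset$, and this event is independent of the ranking.

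Upper-bounding the mechanism's welfare. Agent $i$ gets welfare exactly when it receives an item from $A_i$. Because the ranking contains essentially no information about $A_i$ beyond the fact that valued items come first, the most any mechanism can do is assign each agent at most one item. The expected welfare is therefore at most $\sum_i \Pr[\text{agent } i \text{ gets an item in } A_i]$. Agent $i$'s probability of getting an item in $A_i$ is largest when it is given its top-ranked item, where it equals $\Pr[A_i\neq\emptyset]=1-(1-p)^n\to 1-e^{-c}$. This alone gives $n(1-e^{-c})$, which is too weak because collisions have not been accounted for. The fix is to bound the mechanism through items instead of agents. Item $j$ yields welfare $1$ only if it goes to an agent who values it, and $\Pr[\exists i: j\in A_i]=1-(1-p)^n\approx 1-e^{-c}$. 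So $\E[\mathrm{SW}]\le n(1-e^{-c})$ regardless of the mechanism.

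Lower-bounding OPT. $\E[\mathrm{OPT}]$ is the expected maximum matching in a random bipartite graph $G(n,n,c/n)$. Its known asymptotics (Karp–Sipser) give $\E[\mathrm{OPT}]/n\to 2-\gamma-e^{-c}-c\gamma e^{-\gamma}\cdot(\ldots)$, where $\gamma$ is the smallest root of $\gamma=ce^{-ce^{-\gamma}}$. For small $c$, $\E[\mathrm{OPT}]\approx n(c-\Theta(c^2))$ and $1-e^{-c}\approx c-c^2/2$. The ratio of these tends to $1$ as $c\to0$, so small $c$ is the wrong regime. The item-side bound matches the agent-side bound, and the naive Bernoulli bound gives a ratio tending to $1$. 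The construction therefore has to be changed.

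The better choice is to take $c$ large, say $c\to\infty$ with $p=c/n$ and $c$ fixed but big. Then $\E[\mathrm{OPT}]\approx n$ because a near-perfect matching exists. The mechanism, however, cannot identify $A_i$ beyond its top items. To prevent it from simply using each agent's top item, we make the welfare come from a two-point distribution: value $1$ with probability $p$ and value $0$ otherwise, while also requiring that agents hold exactly one valued item with the correct probabilities. The main obstacle is to show that any ordinal mechanism suffers collisions on top choices at rate $1-1/e$. Our plan for this is to argue that an agent's top item is uniformly random and independent across agents. Giving an item to someone who did not rank it first yields value only with probability about $p$. Hence the welfare is at most the expected number of distinct top-choice items, which is $n(1-(1-1/n)^n)\to n(1-1/e)$, while OPT is $\approx n$ when $c$ is large. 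We will make this precise by choosing $p=c/n$ with $c\to\infty$ slowly, so that $\Pr[|A_i|\ge2]$ is controlled.

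Putting the two bounds together gives a ratio tending to $e/(e-1)$.
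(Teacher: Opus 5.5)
\textbf{There is a gap: your final construction ($c\to\infty$) cannot give a ratio above $1$ asymptotically.} The step that breaks is ``giving an item to someone who did not rank it first yields value only with probability about $p$.''

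You correctly observed that $|A_i|$ is independent of $\succ_i$. It follows that the item agent $i$ ranks $r$-th is valued exactly when $|A_i|\ge r$, where $|A_i|\sim\mathrm{Bin}(n,c/n)$. For large $c$, the second-, third-, \dots{} ranked items are therefore valued with probability close to $1$; for instance $\Pr[|A_i|\ge 2]\approx 1-(1+c)e^{-c}$. So the bound ``welfare $\le$ number of distinct top items'' is false in this regime.

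Random serial dictatorship shows this concretely. The agent that arrives when $j$ items remain gets a valued item with probability $1-(1-p)^j$, because the remaining set is independent of its $A$. Total welfare is then $\sum_{j=1}^n(1-(1-p)^j)\ge n-1/p=n(1-1/c)$, while $\mathrm{OPT}\le n$. For every $c>e$ this already beats $e/(e-1)$, and the ratio tends to $1$ as $c\to\infty$. Your remark that $\Pr[|A_i|\ge 2]$ must be controlled is in fact incompatible with $c\to\infty$.

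The proposed patch of giving each agent exactly one valued item does not help either. It is not an i.i.d.\ draw from a univariate $F$. It would also make $\mathrm{OPT}$ equal to the number of distinct top items, which the mechanism that gives each top-ranked item to one agent ranking it first attains exactly.

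\textbf{The actual misstep came earlier, when you abandoned $c\to 0$.} You did so because $\mathrm{OPT}$ is close to $n(1-e^{-c})$. But that quantity is the bound ``each item contributes only if someone values it,'' which $\mathrm{OPT}$ itself obeys, so it cannot separate $\mathrm{OPT}$ from any mechanism. The sparse regime is the right one, and your top-choice collision argument is exactly what works there. This is the paper's proof, with $p=1/n^2$, i.e.\ $c=1/n$.

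\emph{Upper bound on any mechanism.} The value of an agent's rank-$r$ item is independent of the ranking profile, and hence of who receives what. So item $g$ contributes at most $(1-(1-p)^n)\le c$ times the probability that its recipient ranks it first, plus $\Pr[|A_i|\ge 2]\le c^2/2$. The probability that the recipient ranks $g$ first is at most $1-(1-1/n)^n$, the chance that anyone ranks $g$ first. Hence every ordinal mechanism has expected welfare at most $nc\,(1-(1-1/n)^n)+nc^2/2$.

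\emph{Lower bound on OPT.} Each agent ranks first and values at most one item. So giving every such item to one agent who likes it is a matching, and its expected size is at least $nc(1-c)$.

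Letting $n\to\infty$ and $c\to 0$ yields $e/(e-1)$.
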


\begin{proof}
Consider a one-to-one matching instance with $n$ agents/items. We prove the theorem using an underlying random valuation profile $V$ produced when agents draw their item values i.i.d. from a common univariate distribution $F$ (hence, $\mathcal{F}=\{F, ..., F\}$). In particular, $F$ returns $1$ with probability $1/n^2$ and $0$ otherwise. The proof follows from the next two lemmas. The first one lower-bounds the expected social welfare of the optimal matching.

\begin{lemma}\label{lem:distortion-lower-bound-opt}
    $\E_{V\sim \mathcal{F}}[\text{OPT}(V)] \geq 1-\frac{2}{n}$.
\end{lemma}

\begin{proof}
We say that agent $i$ {\em likes} item $g$ if $i$ ranks $g$ at the top of her ranking and has value $1$ for it. We will bound $\text{OPT}(V)$ by considering matchings in which each item is either assigned to an agent who likes it or stays unmatched if no such agent exists. Then, since each agent likes at most one item, the contribution of each item $g$ to the social welfare of any such matching $M$ is $1$ if there is some agent who likes it and is $0$ otherwise.

    Consider an item $g$ and an agent $i$. By the definition of distribution $F$, we have that the probability that agent $i$ likes item $g$ is $\frac{1}{n}\cdot \left(1-\left(1-\frac{1}{n^2}\right)^n\right)$. This stems from the fact that agent $i$ has value $1$ for at least one item with probability $1-\left(1-\frac{1}{n^2}\right)^n$ and, then, each item can be ranked at the top of agent $i$ equiprobably. Recall that all values are drawn independently. 
    
    We are now ready to bound the contribution of item $g$ to the social welfare of matching $M$. We will need the next technical claim.

\begin{claim}\label{claim:convexity}
    For every $z\in [0,1]$ and $k\geq 1$, it holds $1-(1-z)^k\geq k\cdot z\cdot(1-z)^k$.
\end{claim}    

\begin{proof}
    Since the function $y^k$ is convex, we have that the slope $\frac{1-(1-z)^k}{z}$ of the line connecting points $(1-z,(1-z)^k)$ and $(1,1)$ is lower-bounded by the value of the derivative $k\cdot y^{k-1}$ of $y^k$ at point $(1-z)$. Hence,
    \begin{align*}
        1-(1-z)^k &\geq k\cdot z\cdot(1-z)^{k-1} \geq k\cdot z\cdot (1-z)^k,
    \end{align*}
    and the claim follows.
\end{proof}

We have that the probability $p_g$ that some agent likes item $g$ is
\begin{align*}
    p_g &=1-\left(1-\frac{1}{n}\cdot \left(1-\left(1-\frac{1}{n^2}\right)^n\right)\right)^n \geq \left(1-\left(1-\frac{1}{n^2}\right)^n\right) \cdot \left(1-\frac{1}{n}\cdot \left(1-\left(1-\frac{1}{n^2}\right)^n\right)\right)^n\\
    &\geq\left(1-\left(1-\frac{1}{n^2}\right)^n\right)\cdot \left(1-\frac{1}{n^2}\right)^n \geq \frac{1}{n}\cdot \left(1-\frac{1}{n^2}\right)^{2n}\geq \frac{1}{n}\cdot \left(1-\frac{2}{n}\right).
\end{align*}
The first inequality follows by applying Claim~\ref{claim:convexity} with $k=n$ and $z=\frac{1}{n}\cdot \left(1-\left(1-\frac{1}{n^2}\right)^n\right)$. The second one follows from the property $(1-z)^k\geq 1-k\cdot z$ for $z\in [0,1]$ and $k\geq 1$ (using $k=n$ and $z=\frac{1}{n}\cdot \left(1-\left(1-\frac{1}{n^2}\right)^n\right)$). The third one follows by applying Claim~\ref{claim:convexity} with $k=n$ and $z=\frac{1}{n^2}$. Finally, the fourth inequality follows again from the property $(1-z)^k\geq 1-k\cdot z$ (using $k=2n$ and $z=\frac{1}{n^2}$).
Hence, 
\begin{align*}
    \E_{V\sim \mathcal{F}}[\text{OPT}(V)] \geq \E_{V\sim \mathcal{F}}[M]= \sum_{g\in [n]}{p_g}\geq 1-\frac{2}{n},
\end{align*}
and the lemma follows.
\end{proof}

The next lemma upper-bounds the expected social welfare of the matching returned by any matching mechanism.
\begin{lemma}\label{lem:distortion-lower-bound-sw}
For every one-to-one ordinal matching mechanism $\mathcal{M}$, $\E_{V\sim \mathcal{F}}[\text{SW}(\mathcal{M}(P(V)),V)] \leq 1-\frac{1}{e}+\frac{2}{n}$.
\end{lemma}

\begin{proof}
Consider a mechanism $\mathcal{M}$ and let $a_g$ denote the agent who is assigned item $g$ by $\mathcal{M}$ (if any). If no agent is assigned item $g$, we adopt the convention $v_{a_gg}=0$. Also, we denote by $\text{rank}_i(g)$ the position item $g$ has in the ranking of agent $i$. First, notice that the expected value of the top-ranked item of agent $i$ is equal to the probability that agent $i$ draws at least one $1$ from $F$, i.e.,
\begin{align}\label{eq:top-ranked}
    \E_{V\sim \mathcal{F}}[v_{ig}|\text{rank}_i(g)=1] &=1-\left(1-\frac{1}{n^2}\right)^n\leq \frac{1}{n}.
\end{align}
The inequality follows by applying the property $(1-z)^k \geq 1-k\cdot z$ for $z\in [0,1]$ and $k\geq 1$ (using $k=n$ and $z=\frac{1}{n^2}$).

Furthermore, the expected value of any non-top-ranked item of agent $i$ is upper-bounded by the expected value of its second-ranked item, which in turn is equal to the probability that agent $i$ draws at least two $1$s from $F$, i.e., 
\begin{align}\nonumber
    \E_{V\sim \mathcal{F}}[v_{ig}|\text{rank}_i(g)>1] &\leq \E_{V\sim \mathcal{F}}[v_{ig}|\text{rank}_i(g)=2]\\\nonumber
    &= 1-\left(1-\frac{1}{n^2}\right)^n-n\cdot \frac{1}{n^2}\cdot \left(1-\frac{1}{n^2}\right)^{n-1}\\\label{eq:non-top-ranked}
    &\leq 1-\left(1-\frac{1}{n}\right)-\frac{1}{n}\cdot \left(1-\frac{n-1}{n^2}\right)\leq \frac{1}{n^2}.
\end{align}
Again, the second inequality follows by the same property mentioned above. Hence, we can bound the expected social welfare of the matching returned by mechanism $\mathcal{M}$ as follows:
\begin{align*}
\E_{V\sim \mathcal{F}}[\text{SW}(\mathcal{M}(P(V)),V)] &= \E_{V\sim \mathcal{F}}\left[\sum_{g\in [n]}{v_{a_gg}}\right]= \sum_{g\in [n]}{\E_{V\sim \mathcal{F}}[v_{a_gg}]}\\
&=\sum_{g\in [n]}{\left(\E_{V\sim \mathcal{F}}[v_{a_gg}|\text{rank}_{a_g}(g)=1]\cdot \Pr[\text{rank}_{a_g}(g)=1]\right.}\\
&\quad\quad\quad\quad {\left.+\E_{V\sim \mathcal{F}}[v_{a_gg}|\text{rank}_{a_g}(g)>1]\cdot \Pr[\text{rank}_{a_g}(g)>1]\right)}\\
&\leq \sum_{g\in [n]}{\left(\frac{1}{n}\cdot \Pr[\text{rank}_{a_g}(g)=1]+\frac{1}{n^2}\right)}.
\end{align*}
Observe that $\Pr[\text{rank}_{a_g}(g)=1]$ is upper-bounded by $1-\left(1-\frac{1}{n}\right)^n$, which is the probability that item $g$ is ranked first by some agent. Hence,
\begin{align*}
\E_{V\sim \mathcal{F}}[\text{SW}(\mathcal{M}(P(V)),V)] &\leq 1-\left(1-\frac{1}{n}\right)^n+\frac{1}{n} = 1-\left(1-\frac{1}{n}\right)^{n-1}+\frac{1}{n}\cdot \left(1-\frac{1}{n}\right)^{n-1}+\frac{1}{n}\\
&\leq 1-\frac{1}{e}+\frac{2}{n},
\end{align*}
completing the proof. The last inequality follows since $(1-1/n)^{n-1}\geq 1/e$ for every $n\geq 1$.
\end{proof}
Now, the definition of distortion and Lemmas~\ref{lem:distortion-lower-bound-opt} and~\ref{lem:distortion-lower-bound-sw} yield
\begin{align*}
\text{dist}(\mathcal{M}) &\geq \frac{\E_{V\sim \mathcal{F}}[\text{OPT}(V)]}{\E_{V\sim \mathcal{F}}[\text{SW}(\mathcal{M}(P(V)),V)]} \geq \frac{1-\frac{2}{n}}{1-\frac{1}{e}+\frac{2}{n}}.
\end{align*}
The theorem follows since the RHS of the above inequality approaches $\frac{e}{e-1}$ as $n$ goes to infinity.
\end{proof}

\subsection{The Random Survivors Mechanism}\label{subsec:rs}

We now present our mechanism for $\bb$-matching instances, which we call {\em Random Survivors} (RS). 
Mechanism RS takes as input a set of items $m$ and a set of $n$ agents with integer quota $b_i\geq 1$ for agent $i\in [n]$ such that $\sum_{i\in [n]}{b_i}=m$. For each $i\in [n]$, RS declares agent $i$ as a survivor 
with probability $p_i:=1-\frac{b_i-1}{3m}$, independently of all other agents. The survivors will be the only agents who can get items. 
For each item $g\in [m]$, let $D_g$ be the set of survivors $i$ that have item $g$ among their top $b_i$ items. If $D_g$ is non-empty, mechanism RS matches item $g$ to one of the agents in $D_g$, selected uniformly at random. Mechanism RS is depicted as Algorithm~\ref{alg:rs} below. From now on, we refer to the top $b_i$ items of agent $i$ as their favorite items.\footnote{We describe our mechanisms as assigning up to $b_i$ items to each agent $i$ and, consequently, leaving some of the items unassigned. Of course, when implementing the mechanisms in practice, an additional final phase can be used to distribute the unassigned items to agents so that each agent covers their quota exactly. }  

\begin{algorithm}[h]
    \caption{The Random Survivors Mechanism}
    \label{alg:rs}
    \begin{algorithmic}[1]
        \Input A \textbf{b}-matching instance with \(m\) items and \(n\) agents. Query access to the set $\tp_i$ containing the favorite items of agent $i\in [n]$
        \Output A \textbf{b}-matching
        \State \(S\gets \emptyset\)
        \ForEach{\(i \in [n]\)}
            \State With probability \(1 - (b_i - 1) / 3m\): \(S\gets S\cup \{i\}\)
        \EndFor
        \State \(M\gets \emptyset\)
        \ForEach{\(g \in [m]\)}
            \State \(D_g \gets \{i\in S: g\in \tp_i\}\)
            \If{\(D_g \neq \emptyset\)}
                \State Choose \(i^* \in D_g\) uniformly at random
                \State \(M \gets M \cup \{(i^*, g)\}\)
            \EndIf
        \EndFor
        \State\Return \(M\)
    \end{algorithmic}
\end{algorithm}

\paragraph{Warming-up (one-to-one matching instances).}
As a warm-up, we first evaluate the distortion of RS for the special case of one-to-one matching. In this case, all agents are survivors, and the mechanism allocates each item equiprobably among the agents that reported it as their favorite. 

\begin{theorem}\label{thm:rs-warm-up}
    The distortion of mechanism RS on one-to-one matching instances is at most $\frac{e}{e-1}$.
\end{theorem}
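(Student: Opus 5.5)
We will compare each agent's expected value under RS with an upper bound on their contribution to the optimum. The key upper bound is trivial. In any one-to-one matching, agent $i$ receives at most one item, and the value of that item is at most $i$'s value for her top-ranked item; denote this value by $v_i^{\max}$. Hence $\text{OPT}(V)\le \sum_{i\in[n]} v_i^{\max}$ pointwise, and so $\E[\text{OPT}(V)]\le \sum_i \E[v_i^{\max}]$. It therefore suffices to show that RS gives each agent $i$ her favorite item with probability at least $1-1/e$, independently of the realization of $v_i^{\max}$. Then $\E[\text{SW}(\text{RS})]\ge (1-1/e)\sum_i \E[v_i^{\max}]$.

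\textbf{Step 1 (conditioning).} Since $b_i=1$, every agent is a survivor ($p_i=1$). Agent $i$ obtains her top item $g$ whenever $g\in \tp_i$; she then wins it uniformly among $D_g$. Fix agent $i$'s valuation vector, including her tie-breaking, so that her top item $g$ and $v_i^{\max}$ are fixed. The other agents draw independently from their UF distributions with $b_j=1$, so each $j\ne i$ ranks $g$ first with probability exactly $1/n$, independently across $j$. Let $X=|D_g|-1\sim \text{Bin}(n-1,1/n)$. Then
\[
\Pr[i \text{ gets } g \mid V_i] = \E\!\left[\tfrac{1}{1+X}\right].
\]

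\textbf{Step 2 (computing the binomial expectation).} We use the standard identity $\E[1/(1+X)] = \frac{1-(1-p)^{n}}{np}$ for $X\sim\text{Bin}(n-1,p)$. It follows from $\binom{n-1}{k}\frac{1}{k+1}=\frac1n\binom{n}{k+1}$. With $p=1/n$, this gives $1-(1-1/n)^n\ge 1-1/e$.

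\textbf{Step 3 (combining).} The expected welfare of RS equals $\sum_i \E\big[v_i^{\max}\cdot \Pr[i\text{ gets top}\mid V_i]\big]$, plus nonnegative terms; in fact $i$ receives only her top item or nothing. This sum is at least $(1-1/e)\sum_i\E[v_i^{\max}]\ge (1-1/e)\E[\text{OPT}]$, which yields distortion at most $e/(e-1)$.

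The only delicate point is Step 1. We must justify that the win probability is independent of agent $i$'s own values, so that the expectation factors. This rests on the independence of the agents' draws together with the UF property applied to the other agents. Note that UF gives the marginal probability $1/n$ even when an agent's values are correlated across items.
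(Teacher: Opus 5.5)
Your proof is correct and follows the paper's argument: you show each agent receives her top item with probability $1-(1-1/n)^n\ge 1-1/e$, independently of her top value, and combine this with $\text{OPT}(V)\le\sum_i v_i^{\max}$. The differences are cosmetic. You evaluate $\E[1/(1+X)]$ via the identity $\binom{n-1}{k}\frac{1}{k+1}=\frac1n\binom{n}{k+1}$, whereas the paper writes $\frac{1}{1+X}=\int_0^1 z^X\,\mathrm{d}z$, a device that carries over to the non-identical Bernoulli variables in Theorem~\ref{thm:alg-2}. You also make explicit the conditioning on $V_i$ that the paper leaves implicit.
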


\begin{proof}
Consider the application of mechanism RS to a one-to-one matching instance with $n$ agents and items and random valuation profile $V$ drawn from the profile of UF distributions $\mathcal{F}=(F_1, ..., F_n)$.

Consider an agent $i\in [n]$ and let item $g$ be the top choice in their ranking. For $j\in [n]\setminus\{i\}$ denote by $X_j$ the indicator random variable denoting whether agent $j$'s top choice is also item $g$ ($X_j=1$) or not ($X_j=0$). Clearly, given the values of the random variables, the probability that mechanism RS assigns item $g$ to agent $i$ is $\frac{1}{1+\sum_{j\in [n]\setminus \{i\}}{X_j}}$. Hence, the probability that mechanism RS assigns agent $i$ their top-choice item is
    \begin{align*}
        \E\left[\frac{1}{1+\sum_{j\in [n]\setminus\{i\}}{X_j}}\right] &= \E\left[\int_0^1{z^{\sum_{j\in [n]\setminus\{i\}}{X_j}}\ud{z}}\right]=\int_0^1{\E\left[z^{\sum_{j\in [n]\setminus\{i\}}{X_j}}\right]}\\
        &= \int_0^1{\prod_{j\in [n]\setminus \{i\}}{\E\left[z^{X_j}\right]\ud{z}}}=\int_0^1{\left(1-\frac{1}{n}+\frac{z}{n}\right)^{n-1}\ud{z}}\\
        &= 1 - \left(1-\frac{1}{n}\right)^{n} \geq 1-1/e.
    \end{align*}
The first equality follows since $\int_0^1{z^d\ud{z}}=\frac{1}{d+1}$ for $d\geq 0$, the second one by linearity of expectation, and the third one since the random variables $X_j$ are independent for different $j\in [n]\setminus \{i\}$. The fourth inequality follows since $X_j=1$ with probability $1/n$.
    
Now, denote by $v_i$ the random variable indicating the valuation of agent $i$ for their top item. Then, by just considering the value the agents get from the assignment of their top item, the expected social welfare of the matching computed by mechanism RS is
    \begin{align*}
        \E_{V\sim \mathcal{F}}[SW(\mathcal{M}_{RS}(P(V)))] &\geq \E\left[\sum_{i\in [n]}{\left(1-1/e\right)\cdot v_i}\right]=\left(1-1/e\right)\cdot \E\left[\sum_{i\in [n]}{v_i}\right]\\
        &\geq \left(1-1/e\right)\cdot \E_{V\sim \mathcal{F}}[OPT(V)].
    \end{align*}
The lemma follows from the definition of the distortion. The equality in this last derivation follows by linearity of expectation, and the second inequality follows since no matching can extract more value than $v_i$ from agent $i\in [n]$, and thus $OPT(V) \leq \sum_{i\in [n]}{v_i}$.
    \end{proof}

\paragraph{Analysis for general $\bb$-matching instances.}
We now consider the general case of different quotas per agent. By extending the arguments in the proof of Theorem~\ref{thm:rs-warm-up}, we will prove that mechanism RS achieves the best-possible distortion of $e/(e-1)$ in the general case as well.

\begin{theorem}\label{thm:alg-2}
    The distortion of algorithm RS on any $\bb$-matching instance is at most $\frac{e}{e-1}$.
\end{theorem}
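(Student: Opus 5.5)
The plan is to show that RS gives every agent $i$ each of her favorite items with probability at least $1-1/e$. The theorem then follows exactly as in the proof of Theorem~\ref{thm:rs-warm-up}. Write $T_i$ for the set of $i$'s top $b_i$ items, and note that no matching gives agent $i$ more than $v_i(T_i)$, so $\text{OPT}(V,\bb)\le\sum_i v_i(T_i)$. By linearity of expectation, $\E[\text{SW}]\ge \sum_i\sum_{g}\Pr[g\in T_i \text{ and } i \text{ gets } g]\cdot\E[v_{ig}\mid g\in T_i]$. Hence a uniform per-item bound of $1-1/e$ yields $\E[\text{SW}]\ge(1-1/e)\E[\sum_i v_i(T_i)]\ge(1-1/e)\E[\text{OPT}]$.

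\textbf{Computing the allocation probability.} Fix agent $i$ and an item $g$, and condition on $g\in T_i$. By the UF property, $g$ lies in $T_j$ with probability $b_j/m$ for every $j\neq i$, independently across agents, since agents draw valuations independently. Survival is also independent of preferences. So agent $j$ competes for $g$, i.e.\ $j\in D_g$, independently with probability $q_j:=p_j b_j/m$. Reproducing the integral trick from Theorem~\ref{thm:rs-warm-up}, I get
\[
\Pr[i \text{ gets } g\mid g\in T_i]=p_i\int_0^1\prod_{j\neq i}\bigl(1-q_j(1-z)\bigr)\,\mathrm{d}z .
\]
With $p_j=1-\frac{b_j-1}{3m}$ this gives $q_j=\frac{b_j}{m}-\frac{b_j(b_j-1)}{3m^2}$. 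Consequently $\sum_{j\neq i}q_j=\frac{m-b_i}{m}-\sum_{j\ne i}\frac{b_j(b_j-1)}{3m^2}$.

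\textbf{Bounding the integral.} I will lower-bound each factor using $\ln(1-x)\ge -x-cx^2$ for a suitable constant $c$ on the relevant range, possibly with a separate treatment when $q_j$ is large. This gives an integrand of at least $\exp\bigl(-(1-z)\sum_{j\ne i}q_j-c(1-z)^2\sum_{j\ne i} q_j^2\bigr)$. The quadratic loss from agents with large $b_j$ should be paid for by the negative term $-\sum_j b_j(b_j-1)/(3m^2)$ in $\sum_j q_j$. This is the reason RS demotes high-quota agents: for a fixed sum, $\prod(1-x_j)$ is smallest when the $x_j$ are concentrated.

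Two separate losses then remain to be covered.
\begin{itemize}
\item The factor $p_i<1$ costs $\frac{b_i-1}{3m}$. This should be compensated by the gain in $\int_0^1 e^{-a(1-z)}\,\mathrm{d}z=\frac{1-e^{-a}}{a}$ when $a$ drops to about $1-b_i/m$. Its derivative in $a$ near $a=1$ is a constant around $0.26$, comfortably enough to absorb a loss of roughly $b_i/(3m)$ after multiplying by $1-1/e$.
\item The unit-quota agents contribute $q_j^2=1/m^2$ with no compensating term. Here I would not use the exponential bound. Instead I would compare directly with the exact one-to-one computation: $(1-(1-z)/m)^{k}$ integrates to $\frac{m}{k+1}\bigl(1-(1-1/m)^{k+1}\bigr)$, which is already at least $1-1/e$ when $k+1\le m$.
\end{itemize}

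\textbf{Main obstacle.} The hardest step is making these two compensations rigorous at the same time, uniformly over all quota vectors. This is especially delicate when some $b_j$ is a constant fraction of $m$, because then $q_j(1-z)$ is not small and the quadratic logarithm bound degrades. My first attempt would be a reduction lemma showing that the product, as a function of the $b_j$'s with $\sum_j b_j=m-b_i$ fixed, is minimized when all other quotas equal $1$. I would prove it by checking that splitting a quota $b$ into $b-1$ and $1$ never increases $\prod_j(1-q_j(1-z))$ pointwise in $z$. This reduces everything to the warm-up computation with $m-b_i$ unit agents and the extra factor $p_i$, which is a one-variable inequality in $b_i/m$ that can be checked directly.
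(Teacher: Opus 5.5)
Your skeleton matches the paper's. You reduce to a per-item probability of $1-1/e$, write it as $p_i\int_0^1\prod_{j\neq i}(1-q_j y)\,\mathrm{d}y$ with $y=1-z$, compare the product to the unit-quota product $(1-y/m)^{m-b_i}$, and finish with an inequality in $b_i/m$. The gap is the reduction lemma: the pointwise inequality you propose to check is false. Splitting a quota $b\ge 2$ into $b-1$ and $1$ replaces the factor $1-q(b)y$ by $(1-q(b-1)y)(1-y/m)$, where $q(b)=\frac{b}{m}-\frac{b(b-1)}{3m^2}$. A direct computation gives
\[
(1-q(b)y)-(1-q(b-1)y)\Bigl(1-\frac{y}{m}\Bigr)=\frac{(b-1)\,y}{m^2}\Bigl(\frac{2}{3}-y\Bigl(1-\frac{b-2}{3m}\Bigr)\Bigr).
\]
Since $b-2<m$, this is negative for $y$ near $1$, i.e.\ $z$ near $0$; for $b=2$ it is negative for every $z<1/3$. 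Concretely, take $m=3$, $b_i=1$ and one other agent of quota $2$. At $z=0$ the integrand is $11/27$, and splitting that agent into two unit agents raises it to $12/27$. So the all-ones configuration does not minimize the integrand pointwise. In fact, for $b=2$ the two integrals over $[0,1]$ are exactly equal, so pointwise domination is impossible. The demotion $p_j=1-\frac{b_j-1}{3m}$ compensates only on average over $z$.

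What is missing is an integrated comparison that uses a single-crossing property together with a monotone weight. The paper, in Lemma~\ref{lem:alg-2-exp-bound}, replaces $1-\frac{b_tp_t}{m}y$ by $(1-y/m)^{b_t}$ one agent at a time and shows three things. First, the difference $f_t$ is non-negative and then non-positive on $[0,1]$ (Lemma~\ref{lem:int-inequality-to-prove-technical-lemma}, from concavity, $f_t(0)=0$ and $f_t'(0)>0$). Second, $\int_0^1 f_t\ge 0$ (Lemma~\ref{lem:int-inequality-f}, via the cubic binomial truncation of Claim~\ref{claim:poly-3}). Third, the product $g_t$ of the other factors is non-negative and non-increasing in $y$, so $\int_0^1 f_t g_t\ge g_t(y_t)\int_0^1 f_t\ge 0$. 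Your incremental splitting can be repaired the same way. The displayed difference changes sign once, from positive to negative, and its integral over $[0,1]$ is $\frac{(b-1)(b-2)}{9m^3}\ge 0$. Weighting it by the non-increasing product of the remaining factors therefore keeps it non-negative. With that fix, the rest of your plan goes through: the exact unit-quota integral $\frac{m}{m-b_i+1}\bigl(1-(1-1/m)^{m-b_i+1}\bigr)$, then a monotonicity check in $b_i/m$, which the paper does via Lemma~\ref{lem:technical-bound-for-q_it-final}.
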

The proof of Theorem~\ref{thm:alg-2}, as well as the proofs of all distortion upper bounds we present in the following, are based on the following key lemma. It asserts that upper bounds on the distortion of mechanisms can be obtained by bounding the probability that each agent gets any of their favorite items.

\begin{lemma}\label{lem:template-upper}
Consider a matching mechanism $\mathcal{M}$ applied on a $\bb$-matching instance with underlying random valuation profile $V$ drawn from the profile of UF distributions $\mathcal{F}$. For $i\in [n]$ and $t\in [b_i]$, let $q_{it}$ be the probability that $\mathcal{M}$ matches agent $i$ with their $t$-th ranked item and let $\rho=\min_{i\in [n], t\in [b_i]}{q_{it}}$. Then, algorithm $\mathcal{M}$ has distortion at most $1/\rho$.
\end{lemma}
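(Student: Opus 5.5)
Lemma~\ref{lem:template-upper} generalizes the last step of the proof of Theorem~\ref{thm:rs-warm-up}, and I will prove it the same way: bound the optimum from above by the total value of the agents' favorite bundles, and bound the mechanism's welfare from below by the value it extracts from favorite items alone.

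\textbf{Upper bound on the optimum.} For each agent $i$, let $g_{i1},\dots,g_{ib_i}$ be their favorite items in ranking order, and let $u_{it}=v_{ig_{it}}$. Any matching gives agent $i$ at most $b_i$ items, so $v_i(M_i)\le\sum_{t\in[b_i]}u_{it}$. Summing over agents gives $\text{OPT}(V,\bb)\le\sum_i\sum_{t\in[b_i]}u_{it}$ pointwise, and hence in expectation. Ties broken randomly do not affect this, since the top-$b_i$ positions still carry the $b_i$ largest values.

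\textbf{Lower bound on the mechanism.} Values are non-negative, so dropping everything outside the favorite bundles gives
\[\text{SW}(\mathcal{M}(P(V)),V)\ge\sum_i\sum_{t\in[b_i]}u_{it}\cdot\mathbb{1}[i\text{ gets }g_{it}].\]
By linearity of expectation, it then suffices to show that
\[\E\bigl[u_{it}\cdot\mathbb{1}[i\text{ gets }g_{it}]\bigr]\ge q_{it}\,\E[u_{it}]\ge\rho\,\E[u_{it}].\]
Combining this with the bound on the optimum gives the ratio $1/\rho$.

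\textbf{Main obstacle: decorrelation.} The expected product factorizes only if the event ``$i$ gets their $t$-th item'' is uncorrelated with the value $u_{it}$. This is delicate because the mechanism sees only rankings, while $u_{it}$ is a function of $V_i$. I would argue this by conditioning. The mechanism's output depends only on the ordinal profile $P(V)$ and its internal coins. Given $P(V)$, the event ``$i$ receives the item at position $t$ of $\succ_i$'' is determined by the mechanism's coins. So I would condition on agent $i$'s ranking $\succ_i$ and integrate out everything else.

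If the statement's $q_{it}$ is read as a bound holding conditionally on each realization of $\succ_i$, the product splits immediately. Otherwise, I need to show that the conditional probability of $i$ getting position $t$, given $\succ_i$, does not depend on $u_{it}$. To do so I would use the following facts:
\begin{itemize}
\item agents draw their valuations independently, so the other agents' reports are independent of $V_i$;
\item $u_{it}$ depends on $V_i$ only through $i$'s values, and the ranking $\succ_i$ is a coarsening of $V_i$;
\item for any fixed ranking $\succ_i$, the probability that $i$ gets position $t$ is some function $h(\succ_i)$;
\item for the mechanisms considered (and by relabeling symmetry), the UF property makes $\E[u_{it}\mid\succ_i]$ something one can average against.
\end{itemize}
The cleanest route is to interpret $q_{it}$ as a lower bound on $\Pr[i\text{ gets }g_{it}\mid\succ_i]$ for every realization. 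In that case, $\E[u_{it}\mathbb{1}]=\E\bigl[\E[u_{it}\mid\succ_i]\cdot h(\succ_i)\bigr]\ge\rho\,\E[u_{it}]$ by the tower rule, using that $u_{it}$ and the mechanism's decision are conditionally independent given $P(V)$. I expect this conditioning step, and checking that the minimum over $i,t$ suffices in the required sense, to be the only real difficulty. The rest is linearity of expectation.
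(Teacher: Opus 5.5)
Your proof follows the paper's. The paper's proof is the single chain $\E[\text{SW}]\ge\E[\sum_i\sum_t q_{it}v_i^{(t)}]\ge\rho\,\E[\sum_i\sum_t v_i^{(t)}]\ge\rho\,\E[\text{OPT}(V,\bb)]$, and its first inequality silently assumes exactly the decorrelation you flag.

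Your worry is justified, and you should drop the hedge. With $q_{it}$ read as an unconditional probability the statement is false, so your ``otherwise'' branch cannot be completed. For example, take $n=m=2$, $\bb=(1,1)$ and items $A,B$:
\begin{itemize}
\item agent 1 has $(v_{1A},v_{1B})=(1,0)$ or $(0,\varepsilon)$, each with probability $1/2$;
\item agent 2 has all values $0$;
\item $\mathcal{M}$ always gives $B$ to agent 1 and $A$ to agent 2.
\end{itemize}
Both distributions are UF, and $q_{11}=q_{21}=1/2$ (in fact for every UF profile). Yet $\E[\text{SW}]=\varepsilon/2$ while $\E[\text{OPT}(V,\bb)]=(1+\varepsilon)/2$, so the ratio is unbounded rather than at most $2$.

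So the lemma needs the hypothesis you call the cleanest route: $\Pr[i\text{ gets their }t\text{-th item}\mid\succ_i]\ge\rho$ for every realization of $\succ_i$. Under it, your tower-rule computation is a complete proof.

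When you write it out, note that conditional independence given $P(V)$ is not enough by itself. You also need $\E[u_{it}\mid P(V)]=\E[u_{it}\mid\succ_i]$, and that is where independence across agents enters. The UF bullet plays no role in this lemma; UF is used only when $q_{it}$ is computed for a specific mechanism.

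This stronger hypothesis is what the paper actually verifies in its applications. In the proof of Theorem~\ref{thm:alg-2}, for example, the $X_j$ are computed for the realized item $g$ and are independent of agent $i$'s data. So the bound $1-1/e$ holds conditionally on $\succ_i$, and the same is true for RSBS and HQL.
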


\begin{proof}
Consider a $\bb$-matching instance with $m$ items and $n$ agents with integer quota $b_i\geq 1$ for agent $i\in [n]$ such that $\sum_{j\in [n]}{b_j}=m$ and an underlying random valuation profile $V$ drawn from the profile of UF distributions $\mathcal{F}$. For each agent $i\in [n]$ and $t\in [b_i]$ let $v_i^{(t)}$ be the random variable indicating the valuation agent $i$ has for the $t$-th item in their ranking. Then, considering the value each agent gets from their favorite items that mechanism $\mathcal{M}$ assigns to them, the expected social welfare of the matching computed by algorithm $\mathcal{M}$ is
\begin{align*}
    \E_{V\sim \mathcal{F}}\left[SW(\mathcal{M}(P(V),\bb),V)\right] &\geq \E\left[\sum_{i\in [n]}{\sum_{t\in [b_i]}{q_{it}\cdot v_{i}^{(t)}}}\right] \geq  \rho \cdot \E\left[\sum_{i\in [n]}{\sum_{t\in [b_i]}{v_{i}^{(t)}}}\right]\\
    &\geq \rho \cdot \E_{V\sim \mathcal{F}}[OPT(V,\bb)].
\end{align*}
The lemma follows from the definition of distortion. The second inequality in the above derivation follows from the definition of $\rho$ and the third one since no $\bb$-matching
can extract more value than $\sum_{t\in [b_i]}{v_i^{(t)}}$ from agent $i$.
\end{proof}

\begin{proof}[Proof of Theorem~\ref{thm:alg-2}]
Consider a $\bb$-matching instance with $m$ items, and $n$ agents with integer quota $b_i\geq 1$ for agent $i\in [n]$ such that $\sum_{i\in [n]}{b_i}=m$ and a random valuation profile $V$ drawn from the set of UF distributions $\mathcal{F}$. We will show that for every agent $i\in [n]$ and $t\in [b_i]$ the probability that algorithm RS assigns the $t$-th top item of agent $i$ to them is at least $1-1/e$. Then, the theorem will follow from Lemma~\ref{lem:template-upper}. 

Let $g$ be the random variable denoting the item which is the $t$-th top item of agent $i$. For $j\in [n]\setminus \{i\}$, we denote by $X_j$ the binary random variable indicating whether the agent $j$ is a survivor and ranks item $g$ in her top $b_j$ positions of their ranking ($X_j=1$) or not ($X_j=0$). Then, the probability that mechanism RS matches item $g$ to agent $i$ is 
\begin{align}\label{eq:alg-2-q_it}
    q_{it} &=p_i\cdot \E\left[\frac{1}{1+\sum_{j\in [n]\setminus \{i\}}{X_j}}\right]. 
\end{align}
Similarly to the proof of Theorem~\ref{thm:rs-warm-up}, we compute the expectation in the RHS of Equation (\ref{eq:alg-2-q_it}) as follows.
\begin{align}\nonumber
    \E\left[\frac{1}{1+\sum_{j\in [n]\setminus \{i\}}{X_j}}\right] &=\E\left[\int_0^1{z^{\sum_{j\in [n]\setminus \{i\}}{X_j}}\ud{z}}\right]=\int_0^1{\E\left[z^{\sum_{j\in [n]\setminus \{i\}}{X_j}}\right]\ud{z}}\\\nonumber
    &=\int_0^1{\prod_{j\in [n]\setminus \{i\}}{\E\left[z^{X_j}\right]}\ud{z}}=\int_0^1{\prod_{j\in [n]\setminus \{i\}}{\left(1-\frac{b_j\cdot p_j}{m}+\frac{b_j\cdot p_j}{m}\cdot z\right)}\ud{z}}\\\label{eq:alg-2-q_it-exp}
    &= \int_0^1{\prod_{j\in [n]\setminus \{i\}}{\left(1-\frac{b_j\cdot p_j}{m}\cdot y\right)}\ud{y}}.
\end{align}
The first equality follows by the fact $\int_0^1{z^d\ud{z}}=\frac{1}{d+1}$ for $d\geq 0$. The second equality follows by linearity of expectation, and the third one by the fact that the random variables $X_j$ for $j\in [n]\setminus \{i\}$ are independent. The fourth equality follows since the random variable $X_j$ is equal to $1$ with probability $\frac{p_j\cdot b_j}{m}$ and is equal to $0$ otherwise. The last equality follows using the substitution $y=1-z$.

We bound the RHS of Equation (\ref{eq:alg-2-q_it-exp}) using the next lemma, which is stated in a slightly more general form that will be useful later in Section~\ref{sec:dist_gap}.
    \begin{lemma}\label{lem:alg-2-exp-bound}
    For every set of agents $S\subseteq [n]$, it holds that
    \begin{align*}
        \int_0^1{\prod_{j\in S}{\left(1-\frac{b_j\cdot p_j}{m}\cdot y\right)\ud{y}}} &\geq \int_0^1{\left(1-\frac{y}{m}\right)^{\sum_{j\in S}{b_j}}\ud{y}}.
    \end{align*}
    \end{lemma}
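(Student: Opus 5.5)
The plan is to prove the inequality pointwise in $y$, i.e., to show that for every $y\in[0,1]$ and every $j\in S$,
\begin{align*}
1-\frac{b_j\cdot p_j}{m}\cdot y \;\geq\; \left(1-\frac{y}{m}\right)^{b_j}.
\end{align*}
Multiplying these factor-wise bounds over $j\in S$ (all factors are nonnegative, since $b_jp_j\le b_j\le m$ and $y\le 1$) and integrating over $[0,1]$ then gives the lemma immediately. So the whole argument reduces to a one-agent inequality in the variables $b=b_j$, $m$, and $y$, with $p_j=1-\frac{b-1}{3m}$ substituted.

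For the single-agent inequality, I will compare $(1-y/m)^b$ with its second-order expansion. Writing $x=y/m\in[0,1/m]$, the standard alternating Bonferroni-type bound gives
\begin{align*}
(1-x)^b \;\leq\; 1-bx+\binom{b}{2}x^2,
\end{align*}
valid for $x\in[0,1]$ and integer $b\ge 1$. This follows, for instance, by noting that $h(x)=1-bx+\binom{b}{2}x^2-(1-x)^b$ has $h(0)=h'(0)=0$ and $h''(x)=b(b-1)\left(1-(1-x)^{b-2}\right)\ge 0$. It therefore suffices to show
\begin{align*}
1-\frac{b}{m}\left(1-\frac{b-1}{3m}\right)y \;\geq\; 1-\frac{b}{m}y+\frac{b(b-1)}{2}\cdot\frac{y^2}{m^2},
\end{align*}
which simplifies to $\frac{b(b-1)}{3m^2}\,y\ge \frac{b(b-1)}{2m^2}\,y^2$, i.e., $y\le 2/3$.

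This is the main obstacle: the quadratic bound only works for $y\le 2/3$, so the constant $1/3$ in $p_j$ is not enough for a pointwise inequality on all of $[0,1]$ via the crude second-order estimate. For the remaining range, I will use a sharper estimate that includes the third-order term,
\begin{align*}
(1-x)^b\le 1-bx+\binom b2x^2-\binom b3x^3+\binom b4 x^4,
\end{align*}
and combine it with $b\le m$, so that $bx\le y\le 1$. Alternatively, I will work directly with the function $\phi(y)=1-\frac{bp}{m}y-(1-y/m)^b$. This function is concave in $y$, since $(1-y/m)^b$ is convex, and $\phi(0)=0$. Hence it suffices to check $\phi(1)\ge 0$, i.e.,
\begin{align*}
\left(1-\frac1m\right)^b\le 1-\frac bm+\frac{b(b-1)}{3m^2}.
\end{align*}
This concavity reduction is the route I would try first, as it removes $y$ entirely.

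To verify the endpoint inequality, set $u=1/m$ and expand: the right-hand side minus the third-order-truncated binomial is $\frac{b(b-1)}{3}u^2-\frac{b(b-1)}{2}u^2+\binom b3u^3-\dots$. That looks negative, so instead I will use the bound $(1-u)^b\le e^{-bu}$ together with $e^{-s}\le 1-s+s^2/2-s^3/6+s^4/24$ for $s=b/m\in(0,1]$. I will then check the resulting polynomial inequality in $s$ and $u$ with $b\le m$, treating the $b=1$ case (where equality holds trivially) separately. If this still fails near small $m$, I will fall back to an induction on $b$ for fixed $m$, which matches the structure of the product.
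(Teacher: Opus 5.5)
\textbf{There is a genuine gap.} Your reduction to a pointwise, factor-by-factor inequality cannot work, because that inequality is false for every $b_j\geq 2$. For $b_j=2$ one has $p_j=1-\frac{1}{3m}$ and
\[
1-\frac{2p_j}{m}\,y-\left(1-\frac{y}{m}\right)^{2}=\frac{y}{m^2}\left(\frac{2}{3}-y\right),
\]
which is negative on $(2/3,1]$. So the integrands are not ordered pointwise even for $S=\{j\}$.

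In general your concavity reduction is valid, but the endpoint inequality $\phi(1)\geq 0$ it leads to is false. The third-order lower bound $(1-\frac1m)^b\geq 1-\frac bm+\binom b2\frac{1}{m^2}-\binom b3\frac{1}{m^3}$ gives
\[
\phi(1)\leq -\frac{b(b-1)}{6m^2}\left(1-\frac{b-2}{m}\right)<0 \quad\text{for all } 2\leq b\leq m .
\]
Your remark that the expansion ``looks negative'' was detecting exactly this. No sharper estimate, exponential bound, or induction on $b$ can rescue a false inequality. The constant $1/3$ in $p_j$ is calibrated for an integrated inequality, not a pointwise one.

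Two ideas are missing.

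First, prove the one-factor bound only after integrating, i.e.\ show $\int_0^1\phi(y)\,\mathrm{d}y\geq 0$. Both integrals are explicit:
\[
\int_0^1\left(1-\frac{bp}{m}y\right)\mathrm{d}y=1-\frac{b}{2m}+\frac{b(b-1)}{6m^2},
\qquad
\int_0^1\left(1-\frac ym\right)^b\mathrm{d}y=\frac{m}{b+1}\Bigl(1-\bigl(1-\tfrac1m\bigr)^{b+1}\Bigr).
\]
Applying the third-order lower bound to $(1-\frac1m)^{b+1}$ shows the second is at most the first; the cubic truncation reproduces the first expression exactly.

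Second, combine the factors without pointwise domination, using a single sign change.
\begin{itemize}
\item By the concavity you noticed, together with $\phi(0)=0$ and $\phi'(0)=\frac{b(b-1)}{3m^2}>0$, there is $y^*\in(0,1]$ with $\phi\geq 0$ on $[0,y^*]$ and $\phi\leq 0$ on $(y^*,1]$.
\item Replace the factors $(1-\frac ym)^{b_j}$ by $1-\frac{b_jp_j}{m}y$ one at a time (for $b_j=1$ they coincide).
\item At each swap, let $g$ be the product of the other current factors. It is nonnegative and non-increasing on $[0,1]$, so $\phi(y)g(y)\geq \phi(y)g(y^*)$ for every $y\in[0,1]$.
\item Hence $\int_0^1\phi g\geq g(y^*)\int_0^1\phi\geq 0$.
\end{itemize}
Chaining these swaps over $j\in S$ gives the lemma; this is exactly the paper's argument.
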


    \begin{proof}
    We rename all agents so that $S=\{1, 2, ..., |S|\}$. We will show that for every $t\in S$ it holds that
    \begin{align}\label{eq:int-inequality-to-prove}
        \int_0^1{\prod_{j=1}^t{\left(1-\frac{b_j\cdot p_j}{m}\cdot y\right)}\cdot \left(1-\frac{y}{m}\right)^{\sum_{j=t+1}^{|S|}{b_j}}\ud{y}} &\geq \int_0^1{\prod_{j=1}^{t-1}{\left(1-\frac{b_j\cdot p_j}{m}\cdot y\right)}\cdot \left(1-\frac{y}{m}\right)^{\sum_{j=t}^{|S|}{b_j}}\ud{y}}.
    \end{align}
    The lemma will then follow by combining these inequalities for $t=|S|, |S|-1, ..., 1$.
    
    Notice that inequality (\ref{eq:int-inequality-to-prove}) holds trivially if $b_t=1$ since $b_t\cdot p_t=1$ (and, hence, $1-\frac{b_t\cdot p_t}{m}\cdot y=1-\frac{y}{m}$) in this case. So, in the following, we assume that $b_t\geq 2$. We will use two technical lemmas. For any agent $t\in [n]$, let $f_t$ be the function defined as 
        \begin{align*}
           f_t(y) &= 1-\frac{b_t\cdot p_t}{m}\cdot y-\left(1-\frac{y}{m}\right)^{b_t}.
        \end{align*}
        
        \begin{lemma}\label{lem:int-inequality-to-prove-technical-lemma}
        For every $t\in [n]$ such that $b_t\geq 2$, there exists $y_t\in [0,1]$ such that the function $f_t$ is non-negative in $[0,y_t]$ and non-positive in $(y_t,1]$.
        \end{lemma}

        \begin{proof}
Using the definition $p_t=1-\frac{b_t-1}{3m}$, we have that the derivative of $f_t$ is
        \begin{align*}
            f'_t(y) &= -\frac{b_t\cdot p_t}{m}+\frac{b_t}{m}\cdot \left(1-\frac{y}{m}\right)^{b_t-1}=-\frac{b_t}{m}\cdot \left(1-\left(1-\frac{y}{m}\right)^{b_t-1}\right)+\frac{b_t\cdot (b_t-1)}{3m^2},
        \end{align*}
        i.e.\ strictly decreasing for $y\in [0,1]$ since $b_t\geq 2$. Hence, function $f_t$ is strictly concave and has at most two roots, one of which is at $0$. Also, notice that $f'_t(0)=\frac{b_t\cdot (b_t-1)}{3m^2}>0$. Hence, there are two possibilities for $f_t$. The first one is that $f_t$ has no second root and $f_t(y)> 0$ for $y>0$. Then, the lemma follows by setting $y_t=1$. The second one is that $f_t$ has a second root $r>0$ and satisfies $f_t(y)>0$ for $y\in (1,r)$, $f_t(r)=0$, and $f_t(y)<0$ for $y\in (r,+\infty)$. The lemma then follows by setting $y_t=\min\{1,r\}$.
        \end{proof}

        \begin{lemma}\label{lem:int-inequality-f}
        For every agent $t\in [n]$, it holds that $\int_0^1{f_t(y)\ud{y}} \geq 0$.
        \end{lemma}
    
    \begin{proof}
    The inequality holds trivially as equality when $b_t=1$. To handle the case $b_t\geq 2$, we will use the next technical claim.

    \begin{claim}\label{claim:poly-3}
    Let $k\geq 3$ and $\ell\geq k-1$ be integers. Then,
    \[
        \left( 1 - \frac{1}{\ell} \right)^{k} \geq 1 - \frac{k}{\ell} + \binom{k}{2}\cdot \frac{1}{\ell^2} - \binom{k}{3}\cdot \frac{1}{\ell^3}\, .
    \]
    \end{claim}

    \begin{proof}
    Recall that 
    \begin{align*}
        \left(1-\frac{1}{\ell}\right)^k &= \sum_{j=0}^k{\left(-1\right)^j\cdot \binom{k}{j}\cdot \frac{1}{\ell^j}}.    
    \end{align*}        
    Hence, 
    \begin{align*}
        &\left(1-\frac{1}{\ell}\right)^k -1 + \frac{k}{\ell}- \binom{k}{2}\cdot \frac{1}{\ell^2} + \binom{k}{3}\cdot \frac{1}{\ell^3} = \sum_{j=4}^k{\left(-1\right)^j\cdot \binom{k}{j}\cdot \frac{1}{\ell^j}}\\
        &= \sum_{j=2}^{\left\lfloor\frac{k-1}{2}\right\rfloor}{\binom{k}{2j}\cdot \frac{1}{\ell^{2j}}\cdot \left(1-\frac{k-2j}{(2j+1)\cdot \ell}\right)}+\sum_{j=\left\lfloor \frac{k+1}{2}\right\rfloor}^k{(-1)^j\cdot \binom{k}{j}\cdot \frac{1}{\ell^j}}\\
        &\geq 0,
    \end{align*} 
    as desired. To see why the last inequality is true, observe that both sums in the RHS expression contain no terms if $k=3$. The second sum contains a single term if $k$ is even and at least $4$; this term is clearly non-negative. The first sum contains terms for $k\geq 5$. In this case, since $j\geq 2$, we have $k-2j\leq k-1 \leq (2j+1)\cdot \ell$, which implies that each term in the first sum is non-negative as well.
    \end{proof}

    We now have
    \begin{align*}
        \int_0^1{\left( 1 - \frac{y}{m} \right)^{b_t} \ud{y}} &= \frac{m}{b_t+1}\cdot \left(1-\left(1-\frac{1}{m}\right)^{b_t+1}\right)\\
        &\leq \frac{m}{b_t+1}\cdot \left(\frac{b_t+1}{m}-\binom{b_t+1}{2}\cdot \frac{1}{m^2}+\binom{b_t+1}{3}\cdot \frac{1}{m^3}\right)\\
        &=1-\frac{b_t}{2m}+\frac{b_t\cdot (b_t-1)}{6m^2} = 1-\frac{b_t\cdot p_t}{2m} = \int_0^1{\left(1-\frac{b_t\cdot p_t}{m}\cdot y\right)\ud{y}},
    \end{align*}
    which implies that $\int_0^1{f_t(y)\ud{y}} \geq 0$, completing the proof of Lemma~\ref{lem:int-inequality-f}.
    The inequality follows by applying Claim~\ref{claim:poly-3} with $\ell=m$ and $k=b_t+1\geq 3$ (since $b_t\leq m$, it is $\ell\geq k-1$ as well). The second last equality follows by the definition of $p_t$.
    \end{proof}

        Let
        \begin{align*}
            g_t &=\prod_{j=1}^{t-1}{\left(1-\frac{b_j\cdot p_j}{m}\cdot y\right)}\cdot \left(1-\frac{y}{m}\right)^{\sum_{j=t+1}^{|S|}{b_j}}.
        \end{align*}
        Using the definitions of functions $f_t$ and $g_t$, we have
        \begin{align*}
            &\int_0^1{\prod_{j=1}^t{\left(1-\frac{b_j\cdot p_j}{m}\cdot y\right)}\cdot \left(1-\frac{y}{m}\right)^{\sum_{j=t+1}^{|S|}{b_j}}\ud{y}} -\int_0^1{\prod_{j=1}^{t-1}{\left(1-\frac{b_j\cdot p_j}{m}\cdot y\right)}\cdot \left(1-\frac{y}{m}\right)^{\sum_{j=t}^{|S|}{b_j}}\ud{y}}\\
            &= \int_0^1{\left(1-\frac{b_t\cdot p_t}{m}\cdot y-\left(1-\frac{y}{m}\right)^{b_t}\right)\cdot \prod_{j=1}^{t-1}{\left(1-\frac{b_j\cdot p_j}{m}\cdot y\right)\cdot \left(1-\frac{y}{m}\right)^{\sum_{j=t+1}^{|S|}{b_j}}}\ud{y}}\\
            &=\int_0^1{f_t(y)\cdot g_t(y)\ud{y}} = \int_0^{y_t}{f_t(y)\cdot g_t(y)\ud{y}}+\int_{y_t}^1{f_t(y)\cdot g_t(y)\ud{y}}\\
            &\geq \int_0^{y_t}{f_t(y)\cdot g_t(y_t)\ud{y}}+\int_{y_t}^1{f_t(y)\cdot g_t(y_t)\ud{y}}=g_t(y_t)\cdot \int_0^1{f_t(y)\ud{y}} \geq 0,
        \end{align*}
        which implies inequality (\ref{eq:int-inequality-to-prove}) and completes the proof of Lemma~\ref{lem:alg-2-exp-bound}. The first inequality follows since $g_t$ is non-increasing in $y$ and, by Lemma~\ref{lem:int-inequality-to-prove-technical-lemma}, $f_t(y)$ is non-negative in $[0,y_t]$ and non-positive in $(y_t,1]$ (if non-empty). The last inequality follows by Lemma~\ref{lem:int-inequality-f} and since $g_t(y)$ is non-negative for $y\in [0,1]$.
    \end{proof}

We are now ready to bound the probability $q_{it}$ as follows. We use Equations (\ref{eq:alg-2-q_it}) and (\ref{eq:alg-2-q_it-exp}), Lemma~\ref{lem:alg-2-exp-bound} with $S=[n]\setminus \{i\}$, the fact $\sum_{j\in [n]\setminus \{i\}}{b_j}=m-b_i$, and the definition $p_i=1-\frac{b_i-1}{3m}$ to get
\begin{align}\nonumber
    q_{it} &= p_i \cdot \E\left[\frac{1}{1+\sum_{j\in [n]\setminus\{i\}}{X_j}}\right] = p_i \cdot \int_0^1{\prod_{j\in [n]\setminus \{i\}}{\left(1-\frac{b_j\cdot p_j}{m}\cdot y\right)}\ud{y}}\\\nonumber
    &\geq p_i\cdot \int_0^1{\left(1-\frac{y}{m}\right)^{\sum_{j\in [n]\setminus\{i\}}{b_j}}\ud{y}}= p_i\cdot \int_0^1{\left(1-\frac{y}{m}\right)^{m-b_i}\ud{y}}\\\nonumber
    &=\left(1-\frac{b_i-1}{3m}\right)\cdot \frac{m}{m-b_i+1}\cdot \left(1-\left(1-\frac{1}{m}\right)^{m-b_i+1}\right)\\\label{eq:alg-2-q_it-final}
    &\geq \left(1-\frac{b_i-1}{3m}\right)\cdot \frac{m}{m-b_i+1}\cdot \left(1-\exp\left(-1+\frac{b_i-1}{m}\right)\right).
\end{align}
The second inequality follows since $(1-1/m)^m\leq e^{-1}$. We will bound the RHS of Equation (\ref{eq:alg-2-q_it-final}) using a last technical lemma. Again, this is stated in a slightly more general form than is necessary here; this will also be useful later in Section~\ref{sec:dist_gap}.
\begin{lemma}\label{lem:technical-bound-for-q_it-final}
    For every $z\in [0,1)$, the function 
    \[g_z(x)=\frac{(1-x/3)\cdot (1-\exp(-1+x+z))}{1-x-z}\] is non-decreasing for $x\in [0,1-z)$.
\end{lemma}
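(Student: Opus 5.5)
Rather than differentiating $g_z$ directly, I will substitute $u=1-x-z$. Then $g_z$ becomes a product of a linear factor and a function of $u$ alone, and I compare logarithmic derivatives. As $x$ ranges over $[0,1-z)$, the new variable $u$ ranges over $(0,1-z]\subseteq(0,1]$ and decreases as $x$ increases. Write $h(u)=\frac{1-e^{-u}}{u}$, so that $g_z(x)=(1-x/3)\cdot h(u)$ with $du/dx=-1$. Both factors are positive on the domain, so it suffices to show that $\frac{d}{dx}\ln g_z(x)\geq 0$.

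The key steps, in order, are as follows.

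\begin{enumerate}
\item Compute the log-derivative:
\[
\frac{d}{dx}\ln g_z(x) \;=\; -\frac{1}{3-x}\;-\;\frac{h'(u)}{h(u)}.
\]

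\item Simplify the second term. Since $\ln h(u)=\ln(1-e^{-u})-\ln u$, we get
\[
\frac{h'(u)}{h(u)}=\frac{e^{-u}}{1-e^{-u}}-\frac{1}{u}=\frac{1}{e^u-1}-\frac{1}{u}.
\]
So monotonicity reduces to
\[
\frac{1}{u}-\frac{1}{e^u-1}\;\geq\;\frac{1}{3-x}.
\]

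\item Remove the dependence on $x$ and $z$. Because $z\geq 0$, we have $x=1-z-u\leq 1-u$, hence $3-x\geq 2+u$ and $\frac{1}{3-x}\leq\frac{1}{2+u}$. It therefore suffices to prove, for all $u>0$,
\[
\frac{1}{u}-\frac{1}{e^u-1}\;\geq\;\frac{1}{2+u}.
\]
\end{enumerate}

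The main obstacle is this last scalar inequality, and I expect it to collapse to a standard fact.

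\begin{itemize}
\item Multiply through by the positive quantity $u(e^u-1)(2+u)$ to obtain
\[
(e^u-1-u)(2+u)\;\geq\; u(e^u-1).
\]
\item Expand and cancel the $ue^u$ and $-u$ terms. What remains is
\[
2e^u-2-2u-u^2\;\geq\;0,
\qquad\text{i.e.}\qquad e^u\geq 1+u+\tfrac{u^2}{2}.
\]
\item This holds for all $u\geq 0$ by the Taylor series of $e^u$, whose remaining terms are all nonnegative.
\end{itemize}

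Hence the log-derivative is nonnegative throughout $[0,1-z)$, and $g_z$ is non-decreasing there, which is the claim of Lemma~\ref{lem:technical-bound-for-q_it-final}.
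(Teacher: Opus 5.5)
Your proof is correct, and it takes a different route from the paper's.

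The paper differentiates $g_z$ directly with the quotient rule, obtaining $g_z'(x)=\frac{2+z+e^{-1+x+z}(-5+4x+2z-xz-x^2)}{3(1-x-z)^2}$. It then shows the numerator is non-negative by differentiating once more. That derivative factors as $-e^{-1+x+z}(1-x)(1-x-z)\le 0$, and the numerator vanishes at the endpoint $x=1-z$, so it is non-negative on $[0,1-z)$.

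In your variable $u=1-x-z$, this numerator equals $e^{-u}\bigl((2+z)(e^u-1-u)-u^2\bigr)$. This is precisely what your condition $\frac{1}{u}-\frac{1}{e^u-1}\ge\frac{1}{3-x}$ becomes after clearing denominators, using $3-x=2+z+u$. So both proofs establish the same inequality, but by different means:
\begin{itemize}
\item The paper fixes $z$ and argues by monotonicity in $x$ plus the boundary value at $u=0$.
\item You drop the $(2+z)$-versus-$2$ slack, which only helps since $e^u-1-u\ge 0$. This reduces to the $z=0$ case, which is exactly the Taylor bound $e^u\ge 1+u+\tfrac{u^2}{2}$.
\end{itemize}

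Your substitution and log-derivative split avoid the second differentiation and the somewhat fortuitous factorization the paper relies on. They also make transparent that $z=0$ is the extremal case and which elementary fact drives the result. The paper's computation, by contrast, is a purely mechanical check for each fixed $z$ with no weakening step.
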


\begin{proof}
    The derivative of function $g_z$ with respect to $x$ is
    \begin{align*}
        g'_z(x) &= \frac{2+z+e^{-1+x+z}\cdot (-5+4x+2z-xz-x^2)}{3(1-x-z)^2}.
    \end{align*}
    The derivative of the numerator with respect to $x$ is equal to
    \begin{align*}
        -e^{-1+x+z}\cdot (1-x)\cdot (1-x-z),
    \end{align*}
    i.e., non-positive for $z\in [0,1)$ and $x\in [0,1-z)$. Hence, the numerator of the equivalent expression for $g'_z(x)$ is non-increasing in $x$. Observe that, for $x=1-z$, the numerator evaluates to $0$, and therefore, $g'_z(x)\geq 0$ for $x\in [0,1-z)$.
\end{proof}
Notice that the RHS of Equation (\ref{eq:alg-2-q_it-final}) is equal to $g_0\left(\frac{b_i-1}{m}\right)$. Since $g_0(x)$ is non-decreasing in $x$, we obtain that $q_{it}\geq g_0(0)=1-1/e$. This completes the proof of Theorem~\ref{thm:alg-2}.
\end{proof}

\section{Bounding the Distortion Gap}\label{sec:dist_gap}
We devote this section to developing our approach for upper-bounding the distortion gap of mechanisms. Our main tool is a lower bound on the distortion of any mechanism (Theorem~\ref{thm:benchmark}), which is used as a benchmark in our analysis. The benchmark is expressed in terms of the parameters of a $\bb$-matching instance and serves as a proxy for the distortion of the optimal mechanism $\mathcal{M}^*$ on this instance. Then, upper bounds on the distortion gap are obtained by combining our benchmark with instance-specific bounds on the distortion (see Corollary~\ref{cor:distortion-gap}).

Note that the lower bound on the distortion presented earlier in Section~\ref{sec:distortion} (Theorem~\ref{thm:distortion-lower-bound}) holds specifically for one-to-one matching instances. To extend it for more general $\bb$-matching instances, the proof of Theorem~\ref{thm:benchmark} (and of the intermediate Lemma~\ref{lem:for-proving-lower-bounds}) below exploits UF distributions that have more structure compared to those used in the proof of Theorem~\ref{thm:distortion-lower-bound}.

\begin{theorem}\label{thm:benchmark}
Consider an ordinal matching mechanism $\mathcal{M}$ applied on $\bb$-matching instances with $m$ items and $n$ agents with integer quota $b_i\geq 1$ for agent $i\in [n]$ such that $\sum_{j\in [n]}{b_j}=m$. Then, the distortion of $\mathcal{M}$ satisfies $\text{dist}(\mathcal{M},\bb)\geq \left(1-\prod_{i\in [n]}{\left(1-\frac{b_i}{m}\right)}\right)^{-1}$.
\end{theorem}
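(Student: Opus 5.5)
We generalize the construction in the proof of Theorem~\ref{thm:distortion-lower-bound}, replacing the Bernoulli i.i.d.\ values with UF distributions whose favorite sets are uniformly random $b_i$-subsets. Fix a large parameter $N$ and, for each agent $i$, let $F_i$ act as follows. A uniformly random subset $T_i\subseteq[m]$ of size $b_i$ is drawn, independently across agents. With probability $\varepsilon$ (tiny, e.g.\ $1/N^2$), agent $i$ is ``active'' and has value $1$ for every item in $T_i$ and $0$ elsewhere; otherwise she has value $0$ everywhere. To make the ordering reveal nothing beyond $T_i$, we let an inactive agent still rank $T_i$ first, for instance by giving values $\delta$ on $T_i$ with $\delta\to 0$. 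This distribution is exchangeable, hence UF, and the ranking reveals $T_i$ but not whether $i$ is active.

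\textbf{Step 1 (mechanism upper bound).} Condition on the reported favorite sets. Activity is independent of $T_i$ and invisible, so any item $g$ given to an agent $a$ with $g\in T_a$ yields expected value exactly $\varepsilon$ (plus $O(\delta)$). Items given outside $T_a$ yield $O(\delta)$. Therefore
\[
\E[\text{SW}]\le \varepsilon\cdot \E\bigl[\,|\{g: g\in \textstyle\bigcup_i T_i\}|\,\bigr]+O(m\delta),
\]
and a fixed item $g$ lies in some $T_i$ with probability $1-\prod_i(1-b_i/m)$. This gives $\E[\text{SW}]\le \varepsilon m\bigl(1-\prod_i(1-b_i/m)\bigr)+o(\varepsilon)$.

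\textbf{Step 2 (OPT lower bound).} When $\varepsilon$ is small, with probability $1-O(n\varepsilon)$ conditioned on any active agent, all other agents are inactive. Because agents are rarely simultaneously active, OPT can give each active agent her entire set $T_i$, so
\[
\E[\text{OPT}]\ge \sum_i \varepsilon b_i\,(1-O(n\varepsilon)) = \varepsilon m(1-O(n\varepsilon)).
\]
The ratio then tends to $\bigl(1-\prod_i(1-b_i/m)\bigr)^{-1}$ as $\varepsilon,\delta\to0$.

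\textbf{Main obstacle.} The main obstacle is making the tie-breaking and $\delta$ terms rigorous. Ties among zero values are broken randomly, so an inactive agent with all zeros would otherwise rank items uniformly. We need the ranking to be equally informative whether active or not; the $\delta$-values on $T_i$ handle this, and we must verify both that UF still holds and that the $O(\delta)$ losses vanish in the limit. An alternative is to mirror the paper's phrasing via the intermediate Lemma~\ref{lem:for-proving-lower-bounds}, which presumably packages exactly this reduction: SW is at most the expected value per favorite item times the probability the item is favored by someone.
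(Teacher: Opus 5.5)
Your proof is correct, but it packages the reduction differently from the paper.

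The paper goes through Lemma~\ref{lem:for-proving-lower-bounds}. It puts all the value on one agent $i^*$: value $1$ on a uniformly random $b_{i^*}$-set, and every other agent identically $0$. Then $\E[\text{OPT}]=b_{i^*}$ exactly, and the mechanism's welfare is just the expected number of $i^*$'s favorites it assigns. The same counting you do (the expected number of items that are someone's favorite is $\rho m$, with $\rho=1-\prod_i(1-b_i/m)$), plus averaging over agents, then produces an agent who receives at most $\rho b_i$ favorites in expectation.

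You instead use one agent-symmetric prior with rare independent activation. In the limit this gives the averaged ratio: $m$ divided by the total expected number of favorites assigned. Your route has two merits. It yields a single hard prior that does not depend on the mechanism. It also makes explicit something the paper leaves implicit: the per-agent favorite counts are all computed under one common law of rankings (in each of the paper's single-agent profiles, every ranking is a uniformly random permutation). The paper's route is exact, with no limits. Its per-agent lemma is also stronger, since it gives $\max_i b_i/\E[\text{favorites to } i]$ rather than the averaged ratio. That is why it can be reused in Theorems~\ref{thm:one-pass-det} and~\ref{thm:secretary-lower-bound}, where the shortfall has to be charged to one specific agent; an averaged ratio cannot do this.

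Three small points on your write-up:
\begin{itemize}
\item The $\delta$ device is unnecessary. An inactive agent with all-zero values reports a uniformly random ranking, which has the same law as an active agent's ranking. So activity is already independent of the full ranking profile and the mechanism's coins; you should condition on these, not just on the favorite sets.
\item If you do keep $\delta$, you need $\delta/\varepsilon\to 0$, not merely $\delta\to 0$.
\item The claim that OPT can give each active agent her entire set fails when active sets overlap. The bound you want instead follows from $\E[\text{OPT}]\ge\sum_i b_i\Pr[i\text{ is the only active agent}]=\varepsilon(1-\varepsilon)^{n-1}m$.
\end{itemize}
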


Theorem~\ref{thm:benchmark} follows as a corollary of the next lemma, which is also used later in Section~\ref{sec:sequential} to prove lower bounds on the distortion of specific classes of mechanisms.

\begin{lemma}\label{lem:for-proving-lower-bounds}
    Consider an ordinal matching mechanism $\mathcal{M}$ applied on $\bb$-matching instances with $m$ items and $n$ agents with integer quota $b_i\geq 1$ for agent $i\in [n]$ such that $\sum_{j\in [n]}{b_j}=m$. If there exists $i^*\in [n]$ such that the expected number of favorite items mechanism $\mathcal{M}$ assigns to agent $i^*$ is at most $\rho\cdot b_{i^*}$, then the distortion of $\mathcal{M}$ is at least $1/\rho$. 
\end{lemma}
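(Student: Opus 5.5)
The plan is to exhibit a single profile of UF distributions $\mathcal{F}$ for which the optimal welfare comes entirely from agent $i^*$'s favorite items, while the welfare of $\mathcal{M}$ is exactly the expected number of favorite items it gives to $i^*$. I read the hypothesis as concerning the ordinal profile in which every agent's ranking is an independent, uniformly random permutation of $[m]$. This is the profile induced by any exchangeable, tie-free (or uniformly tie-broken) distributions, and it is the one the construction below generates.

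\emph{Construction.} Agent $i^*$ draws a uniformly random set $T\subseteq[m]$ of size $b_{i^*}$. She has value $1$ for each item of $T$ and value $0$ for every other item. Every agent $i\neq i^*$ has value $0$ for all items. All these distributions are exchangeable, so they are UF. By the random tie-breaking rule in the definition of $P(V)$, agent $i^*$'s ranking places $T$ in uniformly random order on top and the rest below it, again uniformly at random. Since $T$ is uniform, her ranking is a uniform permutation whose top $b_{i^*}$ items are exactly $T$, i.e.\ exactly her favorite items. The other agents' rankings are independent uniform permutations. So the induced ordinal profile is the i.i.d.\ uniform profile of the hypothesis. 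If one prefers to avoid identically zero valuations, each other agent can instead draw i.i.d.\ values from $[0,\varepsilon]$ and $i^*$ can add i.i.d.\ noise from $[0,\varepsilon]$. This keeps the ordinal profile identical in distribution, and letting $\varepsilon\to 0$ at the end changes nothing below.

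\emph{Evaluating both sides.} For every realization, $\text{OPT}(V,\bb)\geq b_{i^*}$, since giving $T$ to $i^*$ and arbitrary items to the others is a feasible $\bb$-matching. Hence $\E[\text{OPT}(V,\bb)]\geq b_{i^*}$. On the mechanism side, the social welfare of $\mathcal{M}(P(V),\bb)$ equals the number of items of $T$ matched to $i^*$, because all other values are $0$. Since $T$ is exactly $i^*$'s set of favorites in $P(V)$, this is the number of favorite items $\mathcal{M}$ assigns to $i^*$, whose expectation is at most $\rho\cdot b_{i^*}$ by assumption. In the $\varepsilon$-version, the extra welfare is at most $m\varepsilon$, which is negligible. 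Taking the ratio gives $\text{dist}(\mathcal{M},\bb)\geq b_{i^*}/(\rho b_{i^*})=1/\rho$.

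\emph{Main obstacle.} The only delicate point is ensuring that the ordinal input seen by $\mathcal{M}$ has the same distribution as in the hypothesis. The mechanism cannot distinguish this instance from any other instance that generates the same uniformly random rankings, and $i^*$'s high-value set must coincide exactly with her reported top $b_{i^*}$ items. Both facts follow from the uniform tie-breaking convention together with the uniform choice of $T$, so I will verify them explicitly. After that, the rest is linearity of expectation.
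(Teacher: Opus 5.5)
Your proposal is correct and follows essentially the same route as the paper: agents other than $i^*$ have all-zero values and $i^*$ has value $1$ on a uniformly random set of $b_{i^*}$ items, so the optimum is $b_{i^*}$ while the mechanism's welfare equals the number of favorite items it gives to $i^*$. Your explicit check that this instance induces independent uniformly random rankings only makes precise what the paper's proof leaves implicit. The same holds for drawing $T$ without replacement, which is what makes the optimum exactly $b_{i^*}$ (the paper's text says ``with replacement'').
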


\begin{proof}
    Consider the profile $\mathcal{F}=\{F_1, ..., F_n\}$ of UF distributions which produce a random valuation profile $V$ as follows. For every $i\in [n]\setminus \{i^*\}$, distribution $F_i$ sets the value of agent $i$ for any item equal to $0$. Distribution $F_{i^*}$ selects $b_{i^*}$ of the $m$ items uniformly at random (with replacement), sets the value of agent $i^*$ for each of them equal to $1$, and sets the value of agent $i^*$ for any other item equal to $0$.

    Clearly, we have $\E_{V\sim \mathcal{F}}[OPT(V)]=b_{i^*}$, since assigning to agent $i^*$ their $b_{i^*}$ favorite items yields social welfare $b_{i^*}$. As agent $i^*$ is the only one who can contribute to the social welfare of mechanism $\mathcal{M}$, we have $\E_{V\sim \mathcal{F}}[\text{SW}(\mathcal{M}(P(V),\bb),V)]\leq \rho\cdot b_{i^*}$. By the definition of distortion, we conclude that $\text{dist}(\mathcal{M},\bb)\geq 1/\rho$.
\end{proof}

We are now ready to prove Theorem~\ref{thm:benchmark}.
\begin{proof}[Proof of Theorem~\ref{thm:benchmark}]
Let $\rho=1-\prod_{i\in [n]}{\left(1-\frac{b_i}{m}\right)}$. For the sake of contradiction, assume that mechanism $\mathcal{M}$ has distortion less than $1/\rho$. Then, by Lemma~\ref{lem:for-proving-lower-bounds}, it must be that, for every $i\in [n]$, the expected number of favorite items mechanism $\mathcal{M}$ assigns to agent $i$ is higher than $\rho\cdot b_i$. Hence, by linearity of expectation, the expected total number of favorite items the agents get is higher than $\sum_{i\in [n]}{\rho\cdot b_i}=\rho\cdot m$.

However, consider an item $g\in [m]$. Since the distribution $F_i$ is UF, the probability that item $g$ is the favorite item of agent $i\in [n]$ is $b_i/m$. Hence, the probability that item $g$ is the favorite item of some agent is only $1-\prod_{i\in [n]}{\left(1-\frac{b_i}{m}\right)}=\rho$. Hence, the expected number of items that are the favorite item of some agent is exactly $\rho\cdot m$, contradicting the conclusion of the previous paragraph.
\end{proof}

Using Theorem~\ref{thm:benchmark}, we can recover the $\left(1-\left(1-1/n\right)^n\right)^{-1}\approx \frac{e}{e-1}$ lower bound of Theorem~\ref{thm:distortion-lower-bound} for one-to-one matchings. For general $\bb$-matching instances, the worst-case distortion $\frac{e}{e-1}$ we proved for algorithm RS can be far from the per-instance distortion lower bound indicated by Theorem~\ref{thm:benchmark}. As an example, consider the extreme case of an instance with a large number of items $m$ and two agents with quotas $m-1$ and $1$, respectively. We will refine our results by using the distortion lower bound from Theorem~\ref{thm:benchmark} as a benchmark, against which the per-instance distortion of mechanisms is compared. Specifically, Theorem~\ref{thm:benchmark} yields the following statement, which we will use extensively from now on.

\begin{corollary}\label{cor:distortion-gap}
The distortion gap of any ordinal matching mechanism $\mathcal{M}$ satisfies    
\[\text{distgap}(\mathcal{M}) \leq \sup_{\bb}{\text{dist}(\mathcal{M},\bb)\cdot \left(1-\prod_{i=1}^n{\left(1-\frac{b_i}{m}\right)}\right)}.\]
\end{corollary}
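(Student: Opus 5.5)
The plan is to unfold the definition of the distortion gap and plug in the per-instance lower bound of Theorem~\ref{thm:benchmark} for the optimal mechanism $\mathcal{M}^*$. By definition,
\[\text{distgap}(\mathcal{M})=\sup_{\bb}\frac{\text{dist}(\mathcal{M},\bb)}{\text{dist}(\mathcal{M}^*,\bb)},\]
so it is enough to bound each ratio inside the supremum by the corresponding term on the right-hand side of the corollary.

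First, I fix an arbitrary quota vector $\bb$ with $\sum_i b_i=m$. Theorem~\ref{thm:benchmark} holds for every ordinal mechanism, so in particular it holds for $\mathcal{M}^*$. This gives
\[\text{dist}(\mathcal{M}^*,\bb)\geq \left(1-\prod_{i\in[n]}\left(1-\frac{b_i}{m}\right)\right)^{-1}.\]
Before dividing, I check that the quantity $\rho_{\bb}:=1-\prod_i(1-b_i/m)$ is strictly positive. Each factor $1-b_i/m$ lies in $[0,1)$ because $b_i\geq 1$, so the product is less than $1$ and $\rho_{\bb}\in(0,1]$. The lower bound is therefore a finite positive number, and taking reciprocals gives $1/\text{dist}(\mathcal{M}^*,\bb)\le \rho_{\bb}$.

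Next, I multiply by $\text{dist}(\mathcal{M},\bb)$, which is nonnegative and in fact at least $1$:
\[\frac{\text{dist}(\mathcal{M},\bb)}{\text{dist}(\mathcal{M}^*,\bb)}\le \text{dist}(\mathcal{M},\bb)\cdot\rho_{\bb}.\]
Taking the supremum over $\bb$ on both sides yields the claimed inequality. If $\text{dist}(\mathcal{M},\bb)=\infty$ for some $\bb$, the right-hand side is infinite and the statement holds trivially.

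There is no substantive obstacle here. The only points needing care are the positivity of $\rho_{\bb}$, so that the reciprocal is legitimate, and the convention for infinite distortion, which makes the bound vacuous rather than ill-defined.
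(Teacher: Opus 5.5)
Your proposal is correct and matches the paper's derivation: the paper obtains the corollary by using Theorem~\ref{thm:benchmark} as a lower bound on $\text{dist}(\mathcal{M}^*,\bb)$ in the definition of the distortion gap and taking the supremum over $\bb$. Your added checks are fine but not needed by the paper: positivity of $1-\prod_i(1-b_i/m)$ before inverting, and the convention for infinite distortion.
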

Our aim from now on is to design mechanisms with (almost) optimal distortion and a distortion gap as close to $1$ as possible. Unfortunately, mechanism RS falls short regarding the second goal.\footnote{To see why, consider again the extreme instance above with $m$ items and two agents with quotas $m-1$ and $1$. Assuming large $m$, we can prove that almost optimal distortion is possible, e.g., by assigning her favorite item to agent 2 and the favorite items that are still available to agent 1. However, according to the definition of mechanism RS, agent 1 is a survivor with probability very close to $2/3$ and, by Lemma~\ref{lem:for-proving-lower-bounds}, the distortion and, consequently, distortion gap are arbitrarily close to $3/2$.} Our next mechanism RSBS, standing for {\em random survivors with item burning and stealing}, blends the ideas of mechanism RS with new ones and achieves both guarantees.

RSBS takes as input a $\bb$-matching instance. Let $b_{\max}=\max_{i\in [n]}{b_i}$ and let $i^*$ be an agent with quota $b_{\max}$. RSBS runs in three phases: 
\begin{itemize}
    \item In phase 1, it runs mechanism RS, ignoring agent $i^*$. I.e., each agent $i\in [n]\setminus \{i^*\}$ is declared as a survivor independently with probability $p_i:=1-\frac{b_i-1}{3m}$. Each item $g$ is assigned to an agent selected uniformly at random from the set $D_g$, which contains each survivor $i$ who has item $g$ as favorite. Of course, item $g$ is not assigned to any agent if the set $D_g$ is empty. 
    
    \item In phase 2, with probability
\begin{align}\label{eq:burning-prob}
    \beta_i &:= 1-\frac{1-\exp\left(-1+\frac{b_{\max}}{m}\right)}{\left(1-\frac{b_{\max}}{m}\right)\cdot p_i \cdot \int_0^1{\prod_{j\in [n]\setminus \{i,i^*\}}{\left(1-\frac{p_j\cdot b_j}{m}\cdot y\right)}\ud{y}}},
\end{align}
and independently for each survivor $i\in [n]\setminus \{i^*\}$, RSBS cancels (or ``burns'') $i$'s assignment from phase 1 and these items become unassigned again.

\item In phase 3, agent $i^*$ is first assigned any of her favorite items that are currently unassigned. Then, agent $i^*$ ``steals'' all items currently assigned to some agent in $[n]\setminus \{i^*\}$ with probability 
\begin{align}\label{eq:stealing-prob}
    \sigma &:= \frac{1-\left(2-\frac{b_{\max}}{m}\right)\cdot \exp\left(-1+\frac{b_{\max}}{m}\right)}{1-\exp\left(-1+\frac{b_{\max}}{m}\right)}.
\end{align}
\end{itemize}

Before proceeding with the analysis of RSBS, we show that it is well-defined, i.e., the quantities $\beta_i$ and $\sigma$ as defined in Equations (\ref{eq:burning-prob}) and~(\ref{eq:stealing-prob}) above are always between $0$ and $1$ and can indeed be used as probabilities. In the proof of the next statement, we exploit Lemmas~\ref{lem:alg-2-exp-bound} and~\ref{lem:technical-bound-for-q_it-final} from Section~\ref{subsec:rs}.

\begin{lemma}\label{lem:rsbs-well-defined}
    Mechanism RSBS is well-defined.
\end{lemma}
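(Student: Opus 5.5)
We need to show $\beta_i\in[0,1]$ for every $i\neq i^*$ and $\sigma\in[0,1]$. Throughout, write $x=b_{\max}/m$. Since $n\geq 2$ (otherwise $i^*$ alone takes everything and the phases are vacuous), we have $b_{\max}\le m-1$, so $x<1$.

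\textbf{The stealing probability $\sigma$.} The denominator $1-e^{-1+x}$ is positive because $x<1$. We need $0\le 1-(2-x)e^{-1+x}\le 1-e^{-1+x}$. The upper inequality is equivalent to $(2-x)e^{-1+x}\ge e^{-1+x}$, which holds since $x\le 1$. For the lower inequality, set $h(x)=(2-x)e^{-1+x}$. Then $h'(x)=(1-x)e^{-1+x}\ge 0$ and $h(1)=1$, so $h(x)\le 1$ on $[0,1]$. Hence $\sigma\in[0,1]$.

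\textbf{The burning probability $\beta_i$.} Define
\[
I_i=\int_0^1\prod_{j\in[n]\setminus\{i,i^*\}}\left(1-\frac{p_jb_j}{m}y\right)\ud{y}.
\]
Each factor lies in $(0,1]$, since $p_jb_j\le b_j<m$. Therefore $0<I_i\le 1$, the denominator in (\ref{eq:burning-prob}) is positive, and so $\beta_i\le 1$. The real content is $\beta_i\ge 0$, which is equivalent to
\[
\left(1-x\right)p_i I_i\ \ge\ 1-e^{-1+x}.
\]
To prove this, first apply Lemma~\ref{lem:alg-2-exp-bound} with $S=[n]\setminus\{i,i^*\}$. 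This gives
\[
I_i\ge\int_0^1\left(1-\frac{y}{m}\right)^{m-b_i-b_{\max}}\ud{y}=\frac{m}{m-b_i-b_{\max}+1}\left(1-\left(1-\frac1m\right)^{m-b_i-b_{\max}+1}\right).
\]
Next use $(1-1/m)^m\le e^{-1}$ to bound the last factor below by $1-\exp(-1+(b_i+b_{\max}-1)/m)$. Now set $z=x$ and $x'=(b_i-1)/m$, and recall $p_i=1-x'/3$. The left-hand side is then at least $(1-x)\,g_z(x')$, where $g_z$ is the function of Lemma~\ref{lem:technical-bound-for-q_it-final}.

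The domain condition of that lemma is $x'<1-z$, i.e.\ $b_i+b_{\max}-1<m$. This holds because $b_i+b_{\max}\le m$ for $i\ne i^*$. The lemma says $g_z$ is non-decreasing in $x'$, so $g_z(x')\ge g_z(0)=(1-e^{-1+z})/(1-z)$. Multiplying by $(1-x)=(1-z)$ gives exactly $1-e^{-1+x}$, which is what we needed.

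\textbf{Main obstacle.} The only step requiring care is this last one. We must check that the parameter substitution matches Lemma~\ref{lem:technical-bound-for-q_it-final}, including the exponent $m-b_i-b_{\max}+1$ and the way $p_i$ is written as $1-x'/3$. We must also confirm that the boundary case $b_i+b_{\max}=m$ (for instance $n=2$) stays within the domain $[0,1-z)$. It does, since $x'=(b_i-1)/m<(m-b_{\max})/m=1-z$.
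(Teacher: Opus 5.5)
Your proof is correct and follows the paper's argument essentially step for step. For $\beta_i\ge 0$ you chain Lemma~\ref{lem:alg-2-exp-bound} with $S=[n]\setminus\{i,i^*\}$, then $(1-1/m)^m\le e^{-1}$, then Lemma~\ref{lem:technical-bound-for-q_it-final} with $x=(b_i-1)/m$ and $z=b_{\max}/m$; your monotonicity argument showing $(2-x)e^{-1+x}\le 1$ is just an equivalent form of the paper's use of $e^{1-x}\ge 2-x$. Your explicit check that $b_i+b_{\max}\le m$ keeps $x$ inside $[0,1-z)$ is a welcome detail that the paper does not spell out.
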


\begin{proof}
We first show that $\beta_i\in [0,1)$ for $i\in [n]\setminus\{i^*\}$. Since $b_i<m$ for $i\in [n]$ and $p_i\in (0,1]$ for $i\in [n]\setminus \{i^*\}$, all terms in the denominator in the definition of $\beta_i$ in Equation (\ref{eq:burning-prob}) are clearly positive. The numerator $1-\exp\left(-1+\frac{b_{\max}}{m}\right)$ is positive as well. Hence, $\beta<1$. By applying Lemma~\ref{lem:alg-2-exp-bound} with $S=[n]\setminus \{i,i^*\}$, we get
    \begin{align}\nonumber
        &p_i \cdot \int_0^1{\prod_{j\in [n]\setminus \{i,i^*\}}{\left(1-\frac{b_j\cdot p_j}{m}\cdot y\right)\ud{y}}} \geq p_i \cdot \int_0^1{\left(1-\frac{y}{m}\right)^{\sum_{j\in [n]\setminus \{i,i^*\}}{b_j}}\ud{y}}\\\nonumber
        &= \left(1-\frac{b_i-1}{3m}\right)\cdot \frac{m}{m-b_i-b_{i^*}+1}\cdot \left(1-\left(1-\frac{1}{m}\right)^{m-b_i-b_{i^*}+1}\right)\\\nonumber
        &\geq \left(1-\frac{b_i-1}{3m}\right)\cdot \frac{m}{m-b_i-b_{i^*}+1}\cdot \left(1-\exp\left(-1+\frac{b_i-1}{m}+\frac{b_{i^*}}{m}\right)\right)\\\label{eq:pi-times-integral}
        &\geq \frac{m}{m-b_{i^*}}\cdot \left(1-\exp\left(-1+\frac{b_{i^*}}{m}\right)\right)= \frac{m}{m-b_{\max}}\cdot \left(1-\exp\left(-1+\frac{b_{\max}}{m}\right)\right)
    \end{align}
Due to Equation~(\ref{eq:pi-times-integral}), Equation~(\ref{eq:burning-prob}) yields $\beta_i\geq 0$. The second inequality follows since $(1-1/m)^m\leq e^{-1}$ and $b_i+b_{i^*}\leq m$. The third inequality follows by applying Lemma~\ref{lem:technical-bound-for-q_it-final} with $x=\frac{b_i-1}{m}$ and $z=\frac{b_{i^*}}{m}$. The last equality follows from the definition of agent $i^*$ as the agent with the highest quota.

We now show that $\sigma\in [0,1)$. Since $b_{\max}<m$, the denominator in the definition of $\sigma$ in Equation (\ref{eq:stealing-prob}) is clearly positive. For the numerator, the same inequality implies that 
\begin{align*}
    1-\left(2-\frac{b_{\max}}{m}\right)\cdot \exp\left(-1+\frac{b_{\max}}{m}\right) &< 1- \exp\left(-1+\frac{b_{\max}}{m}\right)
\end{align*}
and, hence, $\sigma<1$. Also, using the property $e^z\geq 1+z$, we obtain
\begin{align*}
    \exp\left(-1+\frac{b_{\max}}{m}\right) &= \left(\exp\left(1-\frac{b_{\max}}{m}\right)\right)^{-1} \leq \left(2-\frac{b_{\max}}{m}\right)^{-1}.
\end{align*}
Hence, the numerator in the definition of $\sigma$ in Equation~(\ref{eq:stealing-prob}) is non-negative, implying that $\sigma\geq 0$.
\end{proof}
Our next statement indicates that RSBS has optimal distortion and almost ideal distortion gap.

\begin{theorem}\label{thm:rsbs-distortion-gap}
    Mechanism RSBS has distortion at most $\frac{e}{e-1}$ and  distortion gap at most $1.0765$.
\end{theorem}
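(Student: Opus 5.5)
The plan is to compute, for every agent and every one of their favorite items, the probability that RSBS assigns that item to them, and to show that this probability takes one common value depending only on $x:=b_{\max}/m$. Then Lemma~\ref{lem:template-upper} gives the distortion bound and Corollary~\ref{cor:distortion-gap} gives the gap bound. One caveat on reading the mechanism: I interpret phase 3 as agent $i^*$ stealing only items that are among her own favorites. Under the literal reading ("all items currently assigned"), an agent $i\neq i^*$ would keep its item with probability only $1-\sigma$, and the computation below would give $e^{-(1-x)}$, which is far below $1-1/e$.

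\emph{Agents $i\neq i^*$.} Fix $i\neq i^*$ and let $g$ be its $t$-th item. Agent $i$ gets $g$ in phase 1 with probability $p_i\cdot\int_0^1\prod_{j\in[n]\setminus\{i,i^*\}}(1-\frac{p_jb_j}{m}y)\ud{y}$. The argument is the same as for Equation~(\ref{eq:alg-2-q_it-exp}), with $i^*$ excluded. Survival in phase 2 multiplies this by $(1-\beta_i)$, and by Equation~(\ref{eq:burning-prob}) the product is exactly $\frac{1-e^{-(1-x)}}{1-x}$. In phase 3, agent $i$ loses $g$ only if $g$ is a favorite of $i^*$ and the stealing event occurs. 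Agent $i^*$'s favorite set is independent of everything else and contains $g$ with probability $x$ (UF), so agent $i$ keeps $g$ with probability $1-\sigma x$. Substituting Equation~(\ref{eq:stealing-prob}), a short computation gives
\[
1-\sigma x=\frac{(1-x)\bigl(1-(1-x)e^{-(1-x)}\bigr)}{1-e^{-(1-x)}},
\]
and hence $q_{it}=h(x):=1-(1-x)e^{-(1-x)}$.

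\emph{Agent $i^*$.} Let $g$ be a favorite item of $i^*$. By UF, $g$ is a favorite of agent $j$ with probability $b_j/m$, and conditional on that, $j$ holds $g$ after phase 2 with probability $\frac{1-e^{-(1-x)}}{1-x}$ by the previous paragraph. These events are disjoint over $j$, since each item is held by at most one agent. Summing and using $\sum_{j\neq i^*}b_j=m(1-x)$, $g$ is held by someone after phase 2 with probability exactly $1-e^{-(1-x)}$. This is independent of $i^*$'s own favorites. So $i^*$ obtains $g$ with probability
\[
e^{-(1-x)}+\sigma\bigl(1-e^{-(1-x)}\bigr)=h(x),
\]
again by Equation~(\ref{eq:stealing-prob}). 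Since $h'(x)=xe^{-(1-x)}\geq 0$, we have $h(x)\ge h(0)=1-1/e$, and Lemma~\ref{lem:template-upper} gives distortion at most $e/(e-1)$. More precisely, $\text{dist}(\text{RSBS},\bb)\le 1/h(b_{\max}/m)$.

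\emph{Distortion gap.} By Corollary~\ref{cor:distortion-gap} it suffices to bound
\[
\sup_{\bb}\frac{1-\prod_i(1-b_i/m)}{h(b_{\max}/m)}.
\]
For fixed $x$, every $y_i=b_i/m$ satisfies $y_i\le x$, and $-\ln(1-y)/y$ is increasing in $y$. Hence $1-y_i\ge (1-x)^{y_i/x}$, and so $\prod_i(1-y_i)\ge (1-x)^{1/x}$ because $\sum_i y_i=1$. The gap is therefore at most
\[
\sup_{x\in(0,1)}\frac{1-(1-x)^{1/x}}{1-(1-x)e^{-(1-x)}}.
\]
This is a one-variable maximization. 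At $x\to 0$ the ratio is $1$, and at $x=1/2$ it is about $0.75/0.6967\approx 1.0765$. The main obstacle I expect is certifying rigorously that this supremum is at most $1.0765$. I would do this by numerical evaluation on a fine grid combined with Lipschitz or derivative bounds for both numerator and denominator on $(0,1)$. I would also double-check the disjointness and independence claims used in the analysis of $i^*$.
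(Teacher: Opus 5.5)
\textbf{There is a genuine gap in the final step.} Your per-agent computation is correct and matches the paper's Lemma~\ref{lem:rsbs-distortion-per-instance}: every agent receives each favorite item with probability exactly $h(x)=1-(1-x)e^{-(1-x)}$. Your reading of phase 3 is also the one the paper's analysis uses, via the factor $1-\sigma b_{\max}/m$. Monotonicity of $h$ then gives distortion $e/(e-1)$, exactly as in the paper.

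The problem is the relaxation $\prod_i(1-b_i/m)\ge(1-x)^{1/x}$. It is valid, but it is too lossy when $1/x$ is not an integer. As a result, the supremum of your one-variable function is not attained at $x=1/2$, and it exceeds the target constant. For example, at $x=0.46$:
\[
1-0.54^{1/0.46}\approx 0.73803, \qquad 1-0.54\,e^{-0.54}\approx 0.68532,
\]
so the ratio is about $1.0769>1.0765$. The maximum of your function is roughly $1.0769$, near $x\approx 0.465$. The grid-based certification you plan would therefore fail, and your argument only proves a distortion gap of about $1.077$.

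The missing ingredient is the exact minimum of the product, which is the paper's Lemma~\ref{lem:product-bound}. The function $\sum_i\ln(1-y_i)$ is concave on the polytope $\{0\le y_i\le x,\ \sum_i y_i=1\}$, so it is minimized at a vertex. At a vertex, $\lfloor 1/x\rfloor$ coordinates equal $x$ and at most one other coordinate equals $1-\lfloor 1/x\rfloor x$. This gives
\[
\prod_i(1-y_i)\ge(1-x)^{\lfloor 1/x\rfloor}\lfloor 1/x\rfloor x .
\]
This bound coincides with $(1-x)^{1/x}$ when $1/x$ is an integer and is strictly larger otherwise. For instance, on $(1/3,1/2)$ the numerator becomes $1-2x(1-x)^2$, which gives a ratio of only about $1.0687$ at $x=0.465$. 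With this sharper bound, the ratio is maximized at $x=1/2$, where the two bounds agree, with value $\frac{3}{4}\cdot\frac{2\sqrt{e}}{2\sqrt{e}-1}\approx 1.07645$.
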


\begin{proof}
We begin by proving a distortion upper bound for mechanism RSBS as a function of the ratio $b_{\max}/m$. 

\begin{lemma}\label{lem:rsbs-distortion-per-instance}
    Mechanism RSBS has distortion at most $\left(1-\left(1-\frac{b_{\max}}{m}\right)\cdot \exp\left(-1+\frac{b_{\max}}{m}\right)\right)^{-1}$ on $\bb$-matching instances with $m$ items and $n$ agents with quota $b_i\geq 1$ for agent $i\in [n]$ such that $\sum_{i\in [n]}{b_i}=m$ and $b_{\max}=\max_{i\in [n]}{b_i}$.
\end{lemma}

\begin{proof}
As mechanism RSBS is a refinement of mechanism RS, we will prove the lemma by adapting the main steps in the proof of Theorem~\ref{thm:alg-2}. Consider a $\bb$-matching instance with $m$ items, and $n$ agents with integer quota $b_i\geq 1$ for agent $i\in [n]$ such that $\sum_{i\in [n]}{b_i}=m$ and the underlying random valuation profile $V$ drawn from the profile of UF distributions $\mathcal{F}$. We will show that, for every agent $i\in [n]$ and $t\in [b_i]$, the probability $q_{it}$ that mechanism RSBS assigns the $t$-th top item of agent $i$ to her is at least $1-\left(1-\frac{b_{\max}}{m}\right)\cdot \exp\left(-1+\frac{b_{\max}}{m}\right)$. The lemma will then follow by Lemma~\ref{lem:template-upper}.

Let $i^*$ be the agent of maximum quota $b_{\max}$ that mechanism RSBS uses. Also, let $g$ be the random variable denoting the item that is the $t$-th top item of agent $i\in [n]\setminus \{i^*\}$ for some $t\in [b_i]$. For $j\in [n]\setminus \{i,i^*\}$, we denote by $X_j$ the binary random variable indicating whether the agent $j$ is a survivor and ranks the item $g$ in their top $b_j$ positions in their ranking ($X_j=1$) or not ($X_j=0$). Then, the probability $q^1_{it}$ that algorithm RSBS assigns item $g$ to agent $i$ at the end of phase 1 is
\begin{align}\label{eq:q_it-1-rsbs}
    q^1_{it} &= p_i\cdot \E\left[\frac{1}{1+\sum_{j\in [n]\setminus \{i,i^*\}}{X_j}}\right].
\end{align}
Then, item $g$ will be assigned to agent $i$ at the end of phase 2 if it was assigned to agent $i$ in phase 1 and was not burnt. This happens with probability 
\begin{align}\label{eq:q_it-2-rsbs}
    q^2_{it} &= (1-\beta_i)\cdot q^1_{it}.
\end{align}
Finally, item $g$ will be assigned to agent $i$ at the end of the execution of mechanism RSBS if it is assigned to agent $i$ by the end of phase 2 and is not stolen by agent $i^*$ during phase 3. Agent $i^*$ has item $g$ among her top $b_{i^*}=b_{\max}$ items with probability $b_{\max}/m$. Hence, the probability that item $g$ is not stolen by agent $i^*$ in phase 3 is $1-\sigma\cdot \frac{b_{max}}{m}$. Overall, using Equations~(\ref{eq:q_it-2-rsbs}) and~(\ref{eq:q_it-1-rsbs}), we obtain that the probability $q_{it}$ that item $g$ is assigned to agent $i$ by mechanism RSBS is 
\begin{align}\nonumber
    q_{it} &= \left(1-\sigma\cdot \frac{b_{\max}}{m}\right) \cdot q^2_{it}=\left(1-\sigma\cdot \frac{b_{\max}}{m}\right)\cdot \left(1-\beta_i\right)\cdot q^1_{it}\\\label{eq:rsbs-q_it}
    &=\left(1-\sigma\cdot \frac{b_{\max}}{m}\right)\cdot \left(1-\beta_i\right)\cdot p_i\cdot \E\left[\frac{1}{1+\sum_{j\in [n]\setminus \{i,i^*\}}{X_j}}\right].
\end{align}
We compute the expectation at the RHS of Equation (\ref{eq:rsbs-q_it}) by repeating the steps that yield derivation (\ref{eq:alg-2-q_it}) in the proof of Theorem~\ref{thm:alg-2} to get
\begin{align}\label{eq:rsbs-exp}
    \E\left[\frac{1}{1+\sum_{j\in [n]\setminus \{i,i^*\}}{X_j}}\right] &= \int_0^1{\prod_{j\in [n]\setminus \{i,i^*\}}{\left(1-\frac{b_j\cdot p_j}{m}\cdot y\right)}\ud{y}}.
\end{align}
Hence, using the definitions of probabilities $\beta_i$ and $\sigma$ from Equations (\ref{eq:burning-prob}) and (\ref{eq:stealing-prob}), respectively, and Equation (\ref{eq:rsbs-exp}), Equation (\ref{eq:rsbs-q_it}) yields
\begin{align*}
    q_{it} &= 1-\left(1-\frac{b_{\max}}{m}\right)\cdot \exp\left(-1+\frac{b_{\max}}{m}\right)
\end{align*}
for agent $i\in [n]\setminus \{i^*\}$, as desired.

Now, assume that item $g$ is among the top $b_{i^*}$ items of agent $i^*$. We denote by $\mathcal{E}$ the event that some agent different than $i^*$ has been assigned item $g$ at the end of phase 2 of mechanism RSBS. Let $\mathcal{E}_i$ be the event that agent $i\in [n]\setminus \{i^*\}$ has been assigned item $g$ at the end of phase 2. Notice that $\mathcal{E}$ is the disjoint union of events $\mathcal{E}_i$ for $i\in [n]\setminus \{i^*\}$. Furthermore, event $\mathcal{E}_i$ is true if item $g$ is the $t$-th top item of agent $i$ (this happens with probability $1/m$ for a given $t$) for $t\in [b_i]$ and, then, item $g$ is assigned by mechanism RSBS to agent $i$ at the end of phase $2$. Hence, 
\begin{align}\nonumber
    \Pr[\mathcal{E}] &= \sum_{i\in [n]\setminus \{i^*\}}{\Pr[\mathcal{E}_i]} = \sum_{i\in [n]\setminus \{i^*\}}{\sum_{t\in [b_i]}{\frac{1}{m}\cdot q^2_{it}}}\\\nonumber
    &=\sum_{i\in [n]\setminus \{i^*\}}{\sum_{t\in [b_i]}{\frac{1}{m}\cdot (1-\beta_i)\cdot p_i\cdot \E\left[\frac{1}{1+\sum_{j\in [n]\setminus \{i,i^*\}}{X_j}}\right]}} \\\label{eq:i-star-prob}
    &= \sum_{i\in [n]\setminus \{i^*\}}{\sum_{t\in [b_i]}{\frac{1-\exp\left(-1+\frac{b_{\max}}{m}\right)}{m-b_{\max}}}}=1-\exp\left(-1+\frac{b_{\max}}{m}\right).
\end{align}
The third equality follows by Equations (\ref{eq:q_it-2-rsbs}) and (\ref{eq:q_it-1-rsbs}), the fourth one by Equation (\ref{eq:rsbs-exp}) and the definition of probability $\beta_i$, and the last one since the double sum contains exactly $m-b_{i^*}=m-b_{\max}$ terms.

Now, using the definition of phase 3 of mechanism RSBS, we can easily compute the probability $q_{i^*t}$ that agent $i^*$ is assigned item $g$ by mechanism RSBS as
\begin{align*}
    q_{i^*t} &= 1-\Pr[\mathcal{E}]+\sigma\cdot \Pr[\mathcal{E}] = 1-\left(1-\frac{b_{\max}}{m}\right)\cdot \exp\left(-1+\frac{b_{\max}}{m}\right),
\end{align*}
completing the proof. The first equality follows since, in phase 3 of mechanism RSBS, agent $i^*$ is assigned item $g$ with certainty if no agent has gotten item $g$ at the end of phase 2 and steals it with probability $\sigma$ otherwise. The second equality follows by Equation~(\ref{eq:i-star-prob}) and the definition of probability $\sigma$.
\end{proof}

Now, notice that the function $1-(1-y)\cdot \exp(-1+y)$ has non-negative derivative $y\cdot \exp(-1+y)$ and is non-decreasing for $y\in [0,1]$. Hence, by Lemma~\ref{lem:rsbs-distortion-per-instance}, the distortion of mechanism RSBS is upper-bounded by $\left(1-1/e\right)^{-1}=\frac{e}{e-1}$, completing the proof of optimality for the distortion of RSBS.

The rest of the proof is devoted to proving the bound on the distortion gap. We will do so using Lemma~\ref{lem:rsbs-distortion-per-instance} together with the next technical lemma.

\begin{lemma}\label{lem:product-bound}
    Consider a $\bb$-matching instance with $m$ items and $n$ agents with agent $i\in [n]$ having integer quota $b_i\geq 1$ so that $\sum_{i\in [n]}{b_i}=m$ and let $b_{\max}=\max_{i\in [n]}{b_i}$. Then,
    \[\prod_{i\in [n]}{\left(1-\frac{b_i}{m}\right)}\geq \left(1-\frac{b_{\max}}{m}\right)^{\left\lfloor \frac{m}{b_{\max}}\right\rfloor}\cdot \left\lfloor \frac{m}{b_{\max}}\right\rfloor\cdot \frac{b_{\max}}{m}.\]
\end{lemma}

\begin{proof}
Let $x_1$, $x_2$, ..., $x_k$ be real numbers satisfying the constraint $\sum_{i\in [k]}{x_i}=1$ and let $x_{\max}=\max_{i\in [k]}{x_i}$. Then, the quantity $\prod_{i\in [k]}{\left(1-x_i\right)}$ is minimized when $k=\left\lceil 1/x_{\max}\right\rceil$, $\left\lfloor 1/x_{\max}\right\rfloor$ of the numbers have value $x_{\max}$ and if $x_{\max}$ does not divide $1$, one number has value $1-\left\lfloor \frac{1}{x_{\max}}\right\rfloor\cdot x_{\max}$. If $x_{\max}$ does not divide $1$, then
using the inequality $1-r\cdot z \geq (1-z)^r$ for $r,z\in [0,1]$ (with $z=x_{\max}$ and $r=\frac{1}{x_{\max}}-\left\lfloor \frac{1}{x_{\max}}\right\rfloor$), we get
\begin{align*}
    \left\lfloor \frac{1}{x_{\max}} \right\rfloor \cdot x_{\max} &=1-\left(\frac{1}{x_{\max}}-\left\lfloor \frac{1}{x_{\max}} \right\rfloor\right)\cdot x_{\max} \geq \left(1-x_{\max}\right)^{\frac{1}{x_{\max}}-\left\lfloor \frac{1}{x_{\max}}\right\rfloor}\, ,
\end{align*}
and hence,
\begin{align*}
    \prod_{i\in [k]}{\left(1-x_i\right)} &\geq \left(1-x_{\max}\right)^{\left\lfloor \frac{1}{x_{\max}}\right\rfloor}\cdot \left\lfloor \frac{1}{x_{\max}}\right\rfloor\cdot x_{\max}.
\end{align*}
Otherwise, if $x_{\max}$ divides $1$, we get 
\begin{align}\label{eq:x-inequality}
    \prod_{i\in [k]}{\left(1-x_i\right)} &\geq \left(1-x_{\max}\right)^{1/x_{\max}}=\left(1-x_{\max}\right)^{\left\lfloor \frac{1}{x_{\max}}\right\rfloor}\cdot \left\lfloor \frac{1}{x_{\max}}\right\rfloor\cdot x_{\max}
\end{align}
again. The lemma follows by setting $k=n$, $x_i=\frac{b_i}{m}$ for $i\in [n]$. Notice that this set of numbers satisfies the constraints.
\end{proof}
Denoting the RSBS mechanism by $\mathcal{M}$, using Lemmas~\ref{lem:rsbs-distortion-per-instance} and~\ref{lem:product-bound}, we obtain
\begin{align}\nonumber
    \text{distgap}(\mathcal{M}) &= \sup_{\bb}{\text{dist}(\mathcal{M},\bb)\cdot \left(1-\prod_{i=1}^n{\left(1-\frac{b_i}{m}\right)}\right)}\\\label{eq:distortion-gap-bound}
    &\leq \frac{1-\left(1-\frac{b_{\max}}{m}\right)^{\left\lfloor \frac{m}{b_{\max}}\right\rfloor}\cdot \left\lfloor \frac{m}{b_{\max}}\right\rfloor\cdot \frac{b_{\max}}{m}}{1-\left(1-\frac{b_{\max}}{m}\right)\cdot \exp\left(-1+\frac{b_{\max}}{m}\right)}.
\end{align}

\begin{figure}[ht]
\centering
\hspace*{-1cm} 
\begin{tikzpicture}[scale=0.9]
  \begin{axis}[
    width=11cm,
    height=7cm,
    xmin=0, xmax=1,
    ymin=0.99, ymax=1.08,
    samples=800,
    domain=0.001:0.999,
    xlabel={$b_{\max}/m$},
    ylabel={distortion gap},
    grid=both,
    thick
  ]
    \addplot[
      smooth
    ]
    {(1 - (1 - x)^(floor(1/x)) * floor(1/x) * x) /
     (1 - (1 - x) * exp(-1 + x))};
  \end{axis}
\end{tikzpicture}
\caption{A plot of the bound on the distortion gap of algorithm RSBS at the RHS of Equation (\ref{eq:distortion-gap-bound}) as a function of the ratio $b_{max}/m$.}\label{fig:plot}
\end{figure}
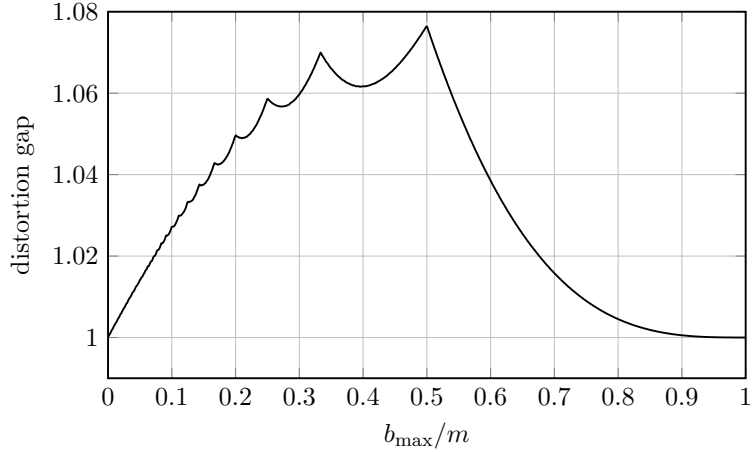
\noindent Figure~\ref{fig:plot} shows the value of the RHS of the above inequality as a function of $b_{\max}/m$. By inspecting the figure, it is clear that the function is maximized for $b_{\max}/m=1/2$ at the value $\frac{3}{4}\cdot \frac{2\sqrt{e}}{2\sqrt{e}-1}\leq 1.0765$, as desired.
\end{proof}

\section{Sequential One-Pass Mechanisms}\label{sec:sequential}
The non-trivial bound on the distortion gap of mechanism RSBS has been made possible using the additional ideas of burning and stealing items, which has resulted in a rather complicated mechanism. In this section, we focus on one-pass mechanisms that are considerably simpler. A one-pass matching mechanism considers the agents one by one according to a possibly random ordering. When considering an agent, the mechanism makes irrevocable decisions about the items that will be assigned to the agent.

Even though the class of one-pass ordinal mechanisms is broader, we focus here on two important subclasses. The first is the class of one-pass mechanisms that use a {\em fixed deterministic ordering} of the agents. The second one is the class of {\em secretary matching mechanisms}, which use a uniformly random agent ordering. We study them in Sections~\ref{subsec:fixed-ordering} and~\ref{subsec:secretary} below.

\subsection{Using a Deterministic Agent Ordering}\label{subsec:fixed-ordering}
For one-pass mechanisms that use a deterministic ordering of the agents, we prove tight bounds on their distortion and distortion gap. We begin with a lower bound on the distortion, which implies a lower bound on the distortion gap.

\begin{theorem}\label{thm:one-pass-det}
    Any one-pass ordinal matching mechanism has distortion at least $2-1/n$ on one-to-one matching instances with $n$ agents/items.
\end{theorem}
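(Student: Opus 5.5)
The plan is to combine Lemma~\ref{lem:for-proving-lower-bounds} with a counting argument on how many items are still available when each agent is considered. The key observation is the following. In the family of UF profiles used in the proof of Lemma~\ref{lem:for-proving-lower-bounds} (only agent $i^*$ has non-zero values, and one-to-one instances have $b_{i^*}=1$), every agent's ordinal preference is a uniformly random ranking, independent across agents. For agents other than $i^*$ this comes from the random tie-breaking among all-zero values; for $i^*$ it holds by symmetry. Hence the distribution of the input profile, and therefore of the mechanism's output, is the same whichever agent plays the role of $i^*$. So it suffices to fix a one-pass mechanism $\mathcal{M}$ with ordering $1,2,\dots,n$, run it on i.i.d.\ uniformly random rankings, and let $y_k$ be the probability that agent $k$ receives their top-ranked item. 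Then we show $\min_k y_k\le \frac{n}{2n-1}$, and Lemma~\ref{lem:for-proving-lower-bounds} applied to the minimizing agent gives distortion at least $\frac{2n-1}{n}=2-\frac1n$.

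First I would define $R_k$ as the (random) number of items still unassigned when agent $k$ is considered. Because the mechanism is one-pass, the set of available items at that moment is determined by the rankings of agents $1,\dots,k-1$ and the mechanism's internal randomness. Agent $k$'s ranking is not yet revealed at that point and is independent of all of this. So agent $k$'s top item lies in the available set with probability exactly $\E[R_k]/n$, which gives
\[
y_k \le \frac{\E[R_k]}{n}.
\]
Second, every agent that receives its favorite item uses up an item. This gives the deterministic bound $R_k\le n-\sum_{j<k}\mathbf{1}[\text{agent } j \text{ got their favorite}]$, and taking expectations yields $\E[R_k]\le n-\sum_{j<k}y_j$. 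The mechanism may leave items unassigned or give non-favorite items; both only decrease $R_k$ further, so the inequality is safe.

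Third, suppose for contradiction that $y_k>\rho$ for all $k$, where $\rho=\frac{n}{2n-1}$. Applying the two bounds to the last agent gives
\[
\rho<y_n\le \frac{n-\sum_{j<n}y_j}{n}<\frac{n-(n-1)\rho}{n},
\]
which rearranges to $\rho(2n-1)<n$. This contradicts the choice $\rho=\frac{n}{2n-1}$. Hence some agent $i^*$ has $y_{i^*}\le\frac{n}{2n-1}$, i.e.\ its expected number of favorite items is at most $\rho\cdot b_{i^*}$, and Lemma~\ref{lem:for-proving-lower-bounds} finishes the proof.

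The step needing the most care is the first one: the independence claim that agent $k$'s top item is uniform and independent of the set of available items. This is exactly where the one-pass (irrevocable, before later preferences are revealed) structure is used. I would also spell out carefully why the agent $i^*$ chosen after the fact is legitimate: the input distribution does not depend on which agent is designated $i^*$, so the mechanism cannot adapt to it. If the tie-breaking in that construction causes trouble, I would instead give agent $i^*$ value $1$ for exactly one uniformly random item, which is again UF.
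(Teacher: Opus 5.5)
Your proof is correct and follows essentially the same route as the paper: you bound the last agent's success probability by the expected fraction of items still available, using that its uniformly random top item is independent of the earlier irrevocable decisions, and you play this off against the success probabilities of agents $1,\dots,n-1$ to reach the threshold $\frac{n}{2n-1}$ before invoking Lemma~\ref{lem:for-proving-lower-bounds}. The differences are only bookkeeping and rigor: you subtract $\sum_{j<n} y_j$ where the paper uses the average number $\beta(n-1)$ of items assigned to the first $n-1$ agents, and you explicitly justify that the input distribution does not depend on which agent is designated $i^*$, a point the paper leaves implicit.
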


\begin{proof}
Consider a matching mechanism $\mathcal{M}$ applied on a one-to-one matching instance with $n$ agents/items. The agents are identified by their position in the deterministic ordering used by $\mathcal{M}$. We will show that there is some agent that has probability at most $\frac{n}{2n-1}$ to get her top item; the theorem will then follow by Lemma~\ref{lem:for-proving-lower-bounds}.

Assume that the expected number of items assigned to the agents in $[n-1]$ is $\beta(n-1)$. Then, there is an agent $i^*\in [n-1]$ that who gets her top item by mechanism $\mathcal{M}$ with probability at most $\beta$. Also, the probability that the top item of agent $n$ is available when it is considered by $\mathcal{M}$ is $1-\frac{\beta(n-1)}{n}$. Hence, some of the agents $i^*$ and $n$ has probability of getting their top item at most $\min\left\{\beta,1-\frac{\beta(n-1)}{n}\right\}\geq \frac{n}{2n-1}$.
\end{proof}

Since there are mechanisms with distortion $\frac{e}{e-1}$ on one-to-one matching instances, Theorem~\ref{thm:one-pass-det} implies the next lower bound on the distortion gap.
\begin{corollary}
    Any one-pass ordinal matching mechanism has distortion gap at least $2-2/e\approx 1.26424$.
\end{corollary}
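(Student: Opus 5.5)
The corollary follows by combining Theorem~\ref{thm:one-pass-det} with the upper bound of Theorem~\ref{thm:rs-warm-up} (or Theorem~\ref{thm:alg-2}), specialized to the one-to-one quota vector $\bb=\mathbf{1}$ with $n$ agents and $m=n$ items. By definition,
\[
\text{distgap}(\mathcal{M}) \;\geq\; \frac{\text{dist}(\mathcal{M},\mathbf{1})}{\text{dist}(\mathcal{M}^*,\mathbf{1})}.
\]
We bound the numerator from below and the denominator from above.

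For the numerator, Theorem~\ref{thm:one-pass-det} gives $\text{dist}(\mathcal{M},\mathbf{1})\geq 2-1/n$ for every one-pass ordinal mechanism $\mathcal{M}$. For the denominator, $\mathcal{M}^*$ attains the best possible distortion on this instance, so it is no worse than RS. The proof of Theorem~\ref{thm:rs-warm-up} in fact yields the exact per-instance bound $\text{dist}(\text{RS},\mathbf{1})\leq \bigl(1-(1-1/n)^n\bigr)^{-1}$, which is at most $e/(e-1)$. Hence
\[
\text{distgap}(\mathcal{M}) \;\geq\; \Bigl(2-\frac{1}{n}\Bigr)\cdot\Bigl(1-\Bigl(1-\frac{1}{n}\Bigr)^n\Bigr) \;\geq\; \Bigl(2-\frac{1}{n}\Bigr)\Bigl(1-\frac{1}{e}\Bigr).
\]

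Since the distortion gap is a supremum over all $\bb$, we let $n\to\infty$. The right-hand side tends to $2(1-1/e)=2-2/e\approx 1.26424$, which gives the claim.

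The only point needing care is the direction of the bound on $\text{dist}(\mathcal{M}^*,\mathbf{1})$. We need an upper bound on the optimal distortion, and RS supplies it. The lower bound of Theorem~\ref{thm:benchmark} is not the relevant one here. Everything else is immediate.
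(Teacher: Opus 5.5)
Your proposal is correct and matches the paper's argument. The paper also combines the $2-1/n$ lower bound of Theorem~\ref{thm:one-pass-det} on one-to-one instances with the fact that RS achieves distortion $e/(e-1)$ there, which serves as the needed upper bound on $\text{dist}(\mathcal{M}^*,\mathbf{1})$, and then lets $n\to\infty$; your sharper per-instance bound $(1-(1-1/n)^n)^{-1}$ is valid but not needed.
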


In the rest of the section, we present and analyze the mechanism HQL (standing for highest quota last), which uses any ordering that has an agent with the highest quota last and achieves the optimal distortion and distortion gap guarantees for $\bb$-matching instances.

Specifically, HQL takes as input a $\bb$-matching instance with a set of items $m$ and a set of $n$ agents with integer quota $b_i\geq 1$ for agent $i\in [n]$ such that $\sum_{i\in [n]}{b_i}=m$ and $b_n\geq b_i$ for every $i\in [n]$. HQL considers the agents one by one. Whenever it considers agent $i$, it does the following with probability $p_i:=\frac{m}{2m-b_n-\sum_{j=1}^{i-1}{b_j}}$: it assigns to agent $i$ each of their $b_i$ top items that have not been assigned to agents before. HQL is depicted as Algorithm~\ref{alg:hal} below. Clearly, mechanism HQL is a one-pass mechanism. We begin with proving a distortion guarantee for it, which, by Theorem~\ref{thm:one-pass-det}, turns out to be the best possible among one-pass mechanisms that use a deterministic agent ordering.

\begin{algorithm}[h]
    \caption{The HQL Mechanism}
    \label{alg:hal}
    \begin{algorithmic}[1]
        \Input A \textbf{b}-matching instance with \(m\) items and \(n\) agents such that $b_n\geq b_i$ for $i\in [n]$. Query access to the set $\tp_i$ containing the favorite items of agent $i\in [n]$
        \Output A \textbf{b}-matching
        \State \(G \gets [m]\)
        \For{\(i \gets 1, ..., n\)}
            \State With probability \(\tfrac{m}{2m - b_n - \sum^{i-1}_{j = 1} b_j}: M_i \gets \tp_i \cap G; G\gets G\setminus M_i\)
        \EndFor
        \State\Return \(M\)
    \end{algorithmic}
\end{algorithm}

\begin{theorem}\label{thm:rank}
    Mechanism HQL has distortion at most $2-\frac{b_{\max}}{m}$ when applied on $\bb$-matching instances with $m$ items in which the maximum quota among all agents is $b_{\max}$.
\end{theorem}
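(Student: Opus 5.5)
The plan is to use Lemma~\ref{lem:template-upper}: it suffices to show that for every agent $i$ and every $t\in[b_i]$, the probability $q_{it}$ that HQL assigns agent $i$ their $t$-th ranked item is at least $\frac{m}{2m-b_{\max}}$. Since $b_n=b_{\max}$, this gives distortion at most $\frac{2m-b_{\max}}{m}=2-\frac{b_{\max}}{m}$.

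Fix agent $i$ and let $g$ be the random item that is $i$'s $t$-th ranked item. Agent $i$ receives $g$ exactly when two events occur. First, HQL activates $i$, which happens with probability $p_i$, independently of everything else. Second, $g$ has not been taken by agents $1,\dots,i-1$.

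I would argue by induction on $i$ that, for every item $g$ (in particular a random favorite of agent $i$, which is uniform by the UF property and independent of the earlier agents' valuations), the probability that $g$ is still unassigned when agent $i$ is considered equals
\[
a_i:=\frac{2m-b_n-\sum_{j<i}b_j}{2m-b_n}.
\]
Since all items are symmetric under the UF distributions, the probability that a fixed item is free before agent $i$ is the same for every item. It therefore equals $1-\E[\#\text{assigned}]/m$. Agent $j$ takes each of their favorite items independently of whether it is free. Each favorite of $j$ is uniform over items, and $j$'s favorite set is independent of the earlier history, so the expected number of new items $j$ takes is $p_j\, b_j\, a_j$. Hence
\[
a_{j+1}=a_j-\frac{p_j b_j a_j}{m}=a_j\left(1-\frac{b_j}{2m-b_n-\sum_{k<j}b_k}\right).
\]
This telescopes to the claimed formula. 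Then
\[
q_{it}=p_i a_i=\frac{m}{2m-b_n}=\frac{m}{2m-b_{\max}}
\]
for every $i$ and $t$.

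A subtlety is that item availability for agent $i$ must be independent of $i$'s own favorite set. This holds because the valuation vectors of different agents are independent and earlier decisions depend only on earlier agents. Uniformity of $g$ then gives $\Pr[g \text{ free}]=a_i$.

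I also need $p_i\le 1$, i.e.\ $2m-b_n-\sum_{j<i}b_j\ge m$, equivalently $\sum_{j<i}b_j+b_n\le m$. This holds for all $i\le n$ because agent $n$ is last, so $\sum_{j<n}b_j=m-b_n$.

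The main obstacle is justifying the symmetry and independence step rigorously under UF, which only guarantees uniform favorite bundles rather than full exchangeability. I would handle it by conditioning on the previous agents' favorite sets and coins, and using the fact that agent $j$'s favorite bundle is a uniform $b_j$-subset. Note that uniform bundles make each item equally likely to be in the bundle, but correlation within the bundle does not matter for expected counts. The recursion only needs linearity of expectation, and the "same probability for every item" claim is not strictly needed: summing over items gives the expected count, and since $i$'s favorite $g$ is uniform and independent, $\Pr[g\text{ free}]$ equals the average over items, which is exactly $1-\E[\#\text{assigned}]/m$.
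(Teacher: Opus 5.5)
Your proposal is correct and is essentially the paper's proof: apply Lemma~\ref{lem:template-upper}, compute $q_{it}=p_i\prod_{j<i}\left(1-\frac{p_jb_j}{m}\right)$ exactly, and telescope to $\frac{m}{2m-b_n}$. Your recursion $a_{j+1}=a_j\left(1-\frac{p_jb_j}{m}\right)$ is this same product. The paper gets it more directly: a fixed item $g$ is free before agent $i$ iff no earlier agent $j$ is both activated and has $g$ among its favorites, and these are independent events of probability $p_jb_j/m$. Your check that $p_i\le 1$ is a welcome detail that the paper omits.

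One caution: the fallback in your last paragraph is wrong. UF makes only the favorite \emph{bundle} uniform, not the $t$-th ranked item. For example, take $m=3$, $b_i=2$ and valuation vectors $(3,2,0)$, $(3,0,2)$, $(0,3,2)$, each with probability $1/3$: the bundles are uniform, but the top item is item $1$ with probability $2/3$. So you genuinely need $\Pr[g\text{ free}]=a_i$ for every fixed item $g$, not just on average over items. This holds by the product formula, or by your symmetry argument, which is valid because HQL only uses the unordered favorite sets.
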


\begin{proof}
    We will show that, for every $i\in [n]$ and $t\in [b_i]$ the probability that agent $i\in [n]$ is assigned their $t$-th best item $g$ is exactly $\frac{m}{2m-b_n}$. The theorem will then follow by Lemma~\ref{lem:template-upper}.

    Let $i\in [n]$ be an agent and $g$ an item that is among the top $b_i$ items of agent $i$. Item $g$ will be available when the mechanism considers agent $i$ if, for each agent $j=1, 2, ..., i-1$ considered before, the following event is true: either item $g$ is not among agent $j$'s top $b_j$ items or it is among agent $j$'s top $b_j$ items but is not assigned to agent $j$. For a given $j$, the probability that this event is true is equal to $1-\frac{p_j\cdot b_j}{m}$. Overall, for $i\in [n]$ we have that the probability that item $g$ is assigned to agent $i\in [n]$ by the mechanism is
    \begin{align*}
        p_i\cdot \prod_{j=1}^{i-1}{\left(1-\frac{p_j\cdot b_j}{m}\right)}.
    \end{align*}
    Notice that the probabilities $p_i$ are defined so that 
    \begin{align*}
        p_{i+1}\cdot \prod_{j=1}^i{\left(1-\frac{p_j\cdot b_j}{m}\right)} &= \frac{m}{2m-b_n-\sum_{j=1}^i{b_j}}\cdot \left(1-\frac{b_i}{2m-b_n-\sum_{j=1}^{i-1}{b_j}}\right) \cdot \prod_{j=1}^{i-1}{\left(1-\frac{p_j\cdot b_j}{m}\right)}\\
        &= \frac{m}{2m-b_n-\sum_{j=1}^{i-1}{b_j}}\cdot \prod_{j=1}^{i-1}{\left(1-\frac{p_j\cdot b_j}{m}\right)} = p_i \cdot \prod_{j=1}^{i-1}{\left(1-\frac{p_j\cdot b_j}{m}\right)},
    \end{align*}
    for every $i=1, 2, ..., n-1$. Hence, for every $i\in [n]$ and $t\in [b_i]$ the probability that agent $i\in [n]$ is assigned her $t$-th best item $g$ is equal to $p_1=\frac{m}{2m-b_n}$.
\end{proof}

We can now use the distortion bound from Theorem~\ref{thm:rank} and Corollary~\ref{cor:distortion-gap} to get a distortion gap bound which is also best possible among one-pass mechanisms that use a deterministic agent ordering.
\begin{theorem}\label{thm:rank-instance-optimal}
The distortion gap of mechanism HQL is at most $2-2/e\approx 1.26424$.
\end{theorem}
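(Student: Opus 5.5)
Following the blueprint of the RSBS analysis, I will combine the per-instance distortion bound of Theorem~\ref{thm:rank} with the benchmark of Corollary~\ref{cor:distortion-gap} and Lemma~\ref{lem:product-bound}. Write $x=b_{\max}/m\in(0,1)$. Theorem~\ref{thm:rank} gives $\text{dist}(\text{HQL},\bb)\le 2-x$. The same quantity $1-\prod_i(1-b_i/m)$ was bounded in the RSBS proof, so Lemma~\ref{lem:product-bound} yields
\[
1-\prod_{i\in[n]}\left(1-\frac{b_i}{m}\right)\le 1-\left(1-x\right)^{\lfloor 1/x\rfloor}\cdot\lfloor 1/x\rfloor\cdot x .
\]
Hence it suffices to show that, for every $x\in(0,1)$,
\[
h(x):=(2-x)\cdot\left(1-(1-x)^{k}\cdot k x\right)\le 2-\frac{2}{e},\qquad k=\lfloor 1/x\rfloor .
\]

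I will handle the bound piecewise in $k$. On each interval $x\in(1/(k+1),1/k]$ the exponent $k$ is constant, so $h$ is a smooth function there, and I will try to show it is increasing in $x$ on each piece, making the supremum on the piece the value at the right endpoint $x=1/k$. Differentiating, the derivative equals
\[
-\left(1-kx(1-x)^k\right)+(2-x)\,k(1-x)^{k-1}\left(1-(k+1)x\right),
\]
and its sign should be checkable using $x\le 1/k$. If monotonicity fails on some piece, I would instead bound $h$ on that piece crudely by replacing $(2-x)$ by $2-1/(k+1)$ and $kx(1-x)^k$ by its minimum over the interval.

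At the endpoints $x=1/k$ the value is
\[
h(1/k)=\left(2-\frac1k\right)\left(1-\left(1-\frac1k\right)^k\right).
\]
I need to show this is at most $2-2/e$ for all $k$, with the limit as $k\to\infty$ equal to $2(1-1/e)$. For example, $k=1$ gives $1$ and $k=2$ gives $9/8=1.125$. Writing $a_k=(1-1/k)^k\uparrow 1/e$, the claim $(2-1/k)(1-a_k)\le 2-2/e$ should follow from an estimate such as $a_k\ge e^{-1}(1-1/(2k))$ or a sharper one like $a_k\ge e^{-1}\left(1-\frac{1}{2k}-\frac{1}{8k^2}\right)$, together with a short direct check for small $k$.

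The main obstacle is the uniform inequality over both $x$ and $k$. The supremum $2-2/e$ is approached only as $x\to 0$, so the argument must control all pieces simultaneously rather than locate a single maximizer. I expect the within-piece monotonicity to be the delicate part. If it proves messy, my fallback is to drop the floor entirely and use the cruder bound $\prod_i(1-b_i/m)\ge (1-x)^{1/x}$ for the multiplicative part, which reduces the problem to a single-variable inequality in $x$, and then compare against the exact piecewise form only for small $k$.
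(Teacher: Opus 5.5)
Your reduction is correct: Corollary~\ref{cor:distortion-gap}, Theorem~\ref{thm:rank} and Lemma~\ref{lem:product-bound} do reduce the theorem to $h(x)\le 2-2/e$, and that inequality is true. However, it is the whole analytic content of the proof, and each concrete step you propose for it fails.

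Your derivative has the wrong sign on its second term. With $\phi(x)=kx(1-x)^k$ the correct expression is
\[
h'(x)=-\bigl(1-\phi(x)\bigr)+(2-x)\,k(1-x)^{k-1}\bigl((k+1)x-1\bigr).
\]
As $x\downarrow 1/(k+1)$ the second term vanishes, so $h'<0$ at the left end of every piece. For $k=1$ one even has $h(x)=1+(1-x)^3$, strictly decreasing on all of $(1/2,1]$. So $h$ is not increasing on the pieces.

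Your crude fallback fails on that same piece, giving $(2-\tfrac12)(1-0)=\tfrac32>2-\tfrac2e$.

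Finally, both proposed estimates for $a_k=(1-1/k)^k$ are false. From $k\ln(1-1/k)=-1-\tfrac1{2k}-\tfrac1{3k^2}-\cdots$ you get $a_k=e^{-1}\bigl(1-\tfrac1{2k}-\tfrac5{24k^2}+O(k^{-3})\bigr)$, so both fail for all large $k$. Already $a_2=\tfrac14$ lies below both $\tfrac34e^{-1}\approx0.276$ and $e^{-1}(1-\tfrac14-\tfrac1{32})\approx0.264$.

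What is missing is a correct second-order lower bound on $(1-y)^{1/y}$ near $y=0$. Since the supremum is only approached as $y\to0$, first-order bounds such as $(1-y)^{1/y}\ge(1-y)/e$ cannot work. They yield $(2-y)\bigl(1-(1-y)/e\bigr)$, which has slope $3/e-1>0$ at $0$ and so exceeds $2-2/e$.

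The paper takes your second fallback: Lemma~\ref{lem:rank-product-bound} replaces the floor expression by $(1-y)^{1/y}$. It then uses the series of $\ln(1-y)$ to get $(1-y)^{1/y}\ge\frac{1-y}{e}\,e^{y/2}\ge\frac{(1-y)(2+y)}{2e}$. After that, $(2-y)\bigl(1-(1-y)^{1/y}\bigr)$ is dominated by a cubic in $y$ that is decreasing on $[0,1]$ and equals $2-2/e$ at $y=0$.

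Your piecewise route can be salvaged too, but with convexity rather than monotonicity. On each piece
\[
h''(x)=k(1-x)^{k-2}\bigl[2(1-x)(1-(k+1)x)+k(2-x)(2-(k+1)x)\bigr]\ge0,
\]
so, by convexity on each piece and continuity of $h$ at the breakpoints, $\sup h=\sup_k h(1/k)$. The endpoint inequality $(2-\tfrac1k)(1-a_k)\le 2-\tfrac2e$ then follows from the same estimate $a_k\ge e^{-1}(1-\tfrac1k)(1+\tfrac1{2k})$, since it reduces to $1+\tfrac{k-1}{2k^2}\le e-1$. As written, though, neither route is carried out.
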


\begin{proof}
In the proof of Theorem~\ref{thm:rank-instance-optimal}, we will use Theorem~\ref{thm:rank} and the two technical lemmas below.
\begin{lemma}\label{lem:rank-product-bound}
    Consider a $\bb$-matching instance with $m$ items and $n$ agents with agent $i\in [n]$ having integer quota $b_i\geq 1$ so that $\sum_{i\in [n]}{b_i}=m$ and let $b_{\max}=\max_{i\in [n]}{b_i}$. Then, \[\prod_{i\in [n]}{\left(1-\frac{b_i}{m}\right)}\geq \left(1-\frac{b_{max}}{m}\right)^{m/b_{\max}}.\]
\end{lemma}

\begin{proof}
We prove the lemma by following very similar steps with those in the proof of Lemma~\ref{lem:product-bound}. Let $x_1$, $x_2$, ..., $x_k$ be real numbers satisfying the constraints $\sum_{i\in [k]}{x_i}=1$ and $0<x_i\leq x_{\max}\leq 1$ for $i\in [k]$. Then, the quantity $\prod_{i\in [k]}{\left(1-x_i\right)}$ is minimized when $k=\left\lceil 1/x_{\max}\right\rceil$, $\left\lfloor 1/x_{\max}\right\rfloor$ of the numbers have value $x_{\max}$ and if $x_{\max}$ does not divide $1$, one number has value $1-\left\lfloor \frac{1}{x_{\max}}\right\rfloor\cdot x_{\max}$. If $x_{\max}$ divides $1$, we obtain 
\begin{align}\label{eq:x-inequality-rank}
    \prod_{i\in [k]}{\left(1-x_i\right)} &\geq \left(1-x_{\max}\right)^{1/x_{\max}}.
\end{align}
Otherwise, using the inequality $1-r\cdot z \leq (1-z)^r$ for $r,z\in [0,1]$ (with $z=x_{\max}$ and $r=\frac{1}{x_{\max}}-\left\lfloor \frac{1}{x_{\max}}\right\rfloor$), we get
\begin{align*}
    \left\lfloor \frac{1}{x_{\max}} \right\rfloor \cdot x_{\max} &=1-\left(\frac{1}{x_{\max}}-\left\lfloor \frac{1}{x_{\max}} \right\rfloor\right)\cdot x_{\max} \geq \left(1-x_{\max}\right)^{\frac{1}{x_{\max}}-\left\lfloor \frac{1}{x_{\max}}\right\rfloor}\, ,
\end{align*}
and hence,
\begin{align*}
    \prod_{i\in [k]}{\left(1-x_i\right)} &\geq \left(1-x_{\max}\right)^{\left\lfloor \frac{1}{x_{\max}}\right\rfloor}\cdot \left\lfloor \frac{1}{x_{\max}}\right\rfloor\cdot x_{\max} \geq \left(1-x_{\max}\right)^{1/x_{\max}}.
\end{align*}
The lemma follows by setting $k=n$, $x_i=\frac{b_i}{m}$ for $i\in [n]$, and $x_{\max}=\frac{b_{\max}}{m}$. Notice that this set of numbers satisfies the constraints.
\end{proof}

\begin{lemma}\label{lem:rank-technical-lemma}
    For every $y\in (0,1]$, it holds $(2-y)\cdot \left(1-(1-y)^{1/y}\right)\leq 2-2/e$.
\end{lemma}

\begin{proof}
    We will first lower-bound $(1-y)^{1/y}$ by a polynomial in terms of $y$. Using the equivalent Taylor series for $\ln{(1-y})=-\sum_{j=1}^{\infty}{\frac{y^j}{j}}$, we have
\begin{align}\nonumber
    (1-y)^{1/y} &= \exp\left(\frac{\ln{(1-y)}}{y}\right) = \exp\left(-\frac{1}{y}\cdot \sum_{j=1}^{\infty}{\frac{y^j}{j}}\right) = \exp\left(-1-\sum_{j=1}^{\infty}{\frac{y^j}{j+1}}\right)\\\label{eq:polynomial}
    &\geq \exp\left(-1+\frac{y}{2}-\sum_{j=1}^{\infty}{\frac{y^j}{j}}\right)=\frac{1-y}{e}\cdot e^{y/2} \geq \frac{(1-y)\cdot (2+y)}{2e}
\end{align}
for $y\in (0,1]$. The last inequality follows since $e^{z}\geq 1+z$. Hence, 
\begin{align*}
    (2-y)\cdot \left(1-(1-y)^{1/y}\right) &\leq (2-y)\cdot \left(1-\frac{(1-y)\cdot (2+y)}{2e}\right)\\
    &= \frac{1}{2e}\left(4(e-1)-2(e-1)y+y^2-y^3\right) \leq 2-2/e.
\end{align*}
The last inequality follows since the function $4(e-1)-2(e-1)y+y^2-y^3$ has derivative $-2(e-1)+2y-3y^2$ which is negative for all $y\in [0,1]$.
\end{proof}
Now, using Corollary~\ref{cor:distortion-gap}, Theorem~\ref{thm:rank}, Lemma~\ref{lem:rank-product-bound}, and Lemma~\ref{lem:rank-technical-lemma} with $y=\frac{b_{\max}}{m}$, we obtain that the distortion gap of mechhanism HQL is
\begin{align*}
\text{distgap}(\text{HQL}) &\leq \sup_{\bb}{\text{dist}(\text{HQL},\bb)}\cdot \left(1-\prod_{i=1}^n{\left(1-\frac{b_i}{m}\right)}\right)\\
&\leq \left(2-\frac{b_{\max}}{m}\right)\cdot \left(1-\left(1-\frac{b_{\max}}{m}\right)^{m/b_{\max}}\right) \leq 2-2/e\,,
\end{align*}
as desired.
\end{proof}

\subsection{Secretary Matching Mechanisms}\label{subsec:secretary}
We now consider secretary mechanisms. Actually, a variant of mechanism RS has an implementation as a secretary mechanism. The mechanism first declares each agent as surviving in the same way mechanism RS does. Then, it considers the agents in a uniformly random order. Whenever considering agent $i$, the mechanism does nothing if the agent is non-surviving. Otherwise, it assigns to her any of her favorite items that have not been previously assigned to any agent. In this way, each item is given to an agent selected uniformly at random among the surviving agents who have this item as favorite. The main difference with the original mechanism RS is that the assignment of different items is not performed independently. This does not affect the analysis since the main argument used in the proof of Theorem~\ref{thm:alg-2} to account for the contribution of the items to the social welfare of the matching returned by the mechanism is linearity of expectation. Thus, the distortion bound of $\frac{e}{e-1}$ carries over to the secretary variant of the mechanism RS.

We are not aware of any secretary mechanism with a distortion gap below $\frac{e}{e-1}$. We can show, however, that near-optimal distortion gaps are not possible.

\begin{theorem}\label{thm:secretary-lower-bound}
    Any secretary ordinal matching mechanism has a distortion gap of at least $4/3$.
\end{theorem}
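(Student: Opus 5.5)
The plan is to exhibit a single quota vector on which every secretary mechanism is far from the benchmark of Theorem~\ref{thm:benchmark}, and then conclude via Corollary~\ref{cor:distortion-gap}. The extreme instance already used in the footnotes works: two agents with quotas $b_1=m-1$ and $b_2=1$, with $m$ large. On this instance the benchmark is $1-\left(1-\frac{m-1}{m}\right)\left(1-\frac1m\right)=1-O(1/m)$. So the best achievable distortion on it is at most $1/(1-O(1/m))$, close to $1$. (The footnote shows a mechanism attaining this, but it is not a secretary mechanism.) It therefore suffices to show that every secretary mechanism has distortion at least $4/3-o(1)$ on this $\bb$, which I will do via Lemma~\ref{lem:for-proving-lower-bounds}. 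Concretely, I will show that some agent receives in expectation at most a $(3/4+O(1/m))$-fraction of their favorite items.

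Fix any secretary mechanism. Each of the two orders occurs with probability $1/2$. I condition on the order and define two quantities.

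First, suppose agent 1 comes first. Let $x$ be the expected fraction of agent 1's $m-1$ favorites that the mechanism assigns to agent 1 at that step. This decision depends only on agent 1's ranking and internal randomness. Agent 2's favorite item is uniform over $[m]$ and independent of both. Hence the probability that agent 2's favorite is already taken by agent 1 is at least $\E[\#\text{favorites assigned to }1]/m=x(m-1)/m$. It follows that in this order agent 2 gets its favorite with probability at most $1-x+1/m$. Agent 1, of course, gets at most the fraction $x$ of its favorites.

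Second, suppose agent 2 comes first. Agent 2 gets its favorite with probability at most $1$, and agent 1 gets at most all of its favorites.

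Averaging over the two orders, agent 1's expected fraction of favorites is at most $(x+1)/2$, and agent 2's is at most $(2-x)/2+O(1/m)$. Since $\min\{(1+x)/2,(2-x)/2\}\le 3/4$ for every $x\in[0,1]$ (equality at $x=1/2$), some agent $i^*$ receives at most $\rho\, b_{i^*}$ favorites in expectation, where $\rho=3/4+O(1/m)$. Lemma~\ref{lem:for-proving-lower-bounds} then gives $\text{dist}(\mathcal{M},\bb)\ge 1/\rho$. Multiplying by the benchmark factor $1-O(1/m)$ and letting $m\to\infty$ yields a distortion gap of at least $4/3$.

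The main obstacle is the first case, where agent 1 comes first. There I must rigorously justify that agent 2's favorite is independent of everything the mechanism sees when it processes agent 1; this is the content of the secretary restriction, since decisions are irrevocable before later preferences are revealed. I must also handle the fact that the mechanism may assign agent 1 items outside its favorite set. Such assignments only increase the chance that agent 2's favorite is taken, so they only help the bound, but the inequality has to be stated carefully as a bound on the total number of items assigned to agent 1.
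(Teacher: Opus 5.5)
Your argument for the secretary side is correct and is the paper's own. It uses the instance $b_1=m-1$, $b_2=1$ and your share $x$ (the paper's $\beta$), with the same averaging over the two equally likely orders, the same balancing at $x=1/2$, and Lemma~\ref{lem:for-proving-lower-bounds}. The paper balances exactly to get $\rho=\frac{3m-1}{4m-2}$, which is equivalent to your $3/4+O(1/m)$ in the limit. The gap is in the other half of the ratio.

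Theorem~\ref{thm:benchmark} is a \emph{lower} bound on the distortion of every mechanism. On your instance it gives $\text{dist}(\mathcal{M}^*,\bb)\geq \bigl(1-\frac{m-1}{m^2}\bigr)^{-1}$, not $\leq$. Likewise, Corollary~\ref{cor:distortion-gap} is an \emph{upper} bound on the distortion gap and cannot prove a lower bound on it. So neither your claim that ``the best achievable distortion on it is at most $1/(1-O(1/m))$'' nor the final ``multiplying by the benchmark factor'' follows from what you cite. The product $\text{dist}(\mathcal{M},\bb)\cdot\bigl(1-\prod_i(1-b_i/m)\bigr)$ dominates $\text{dist}(\mathcal{M},\bb)/\text{dist}(\mathcal{M}^*,\bb)$, so a lower bound on it says nothing about the gap from below. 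The paper also leaves open whether the benchmark is tight in general, so its tightness here cannot be assumed.

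What you need is an explicit mechanism certifying that $\text{dist}(\mathcal{M}^*,\bb)$ is close to $1$. The paper supplies one, and your parenthetical pointer to the footnote gestures at it without proving anything. Take the non-secretary mechanism that first gives agent~2 her top item and then gives agent~1 all her remaining favorites. Agent~2 gets her top item with probability $1$. Agent~2's top item is uniform and independent of agent~1's ranking, so for every $t\in[m-1]$ agent~1 loses her $t$-th ranked item with probability exactly $1/m$. By Lemma~\ref{lem:template-upper}, this mechanism has distortion at most $\frac{m}{m-1}$, hence $\text{distgap}(\mathcal{M})\geq \frac{1}{\rho}\cdot\frac{m-1}{m}\to\frac{4}{3}$.

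This correction factor agrees with your benchmark factor up to $O(1/m)$, which is why your final number comes out right. The justification, however, has to come from this explicit upper bound, not from Theorem~\ref{thm:benchmark} or Corollary~\ref{cor:distortion-gap}.
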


\begin{proof}
    We consider a $\bb$-matching instance with $m$ items and two agents with quotas $b_1=m-1$ and $b_2=1$. We will first show that the distortion of any secretary mechanism on this instance is at least $\frac{4m-2}{3m-1}$. We do so by showing that either the expected number of favorite items that agent 1 gets is at most $\frac{3m-1}{4m-2}\cdot (m-1)$ or the probability that agent 2 gets their top item is at most $\frac{3m-1}{4m-2}$. Then, the distortion lower bound follows by Lemma~\ref{lem:for-proving-lower-bounds} for $\rho=\frac{3m-1}{4m-2}$. 

    Assume that agent 1 receives an expected number of $\beta(m-1)$ favorite items when considered first. Then, the probability that agent 2 gets her top item is (at most) $1-\frac{\beta(m-1)}{m}$. Overall, the expected number of favorite items assigned to agent 1 is at most $\frac{1}{2}\cdot (m-1)+\frac{1}{2}\cdot \beta(m-1)=\frac{1+\beta}{2}\cdot (m-1)$ while the probability that agent 2 gets her top item is $\frac{1}{2}\cdot 1 +\frac{1}{2}\cdot \left(1-\frac{\beta(m-1)}{m}\right)=1-\frac{\beta(m-1)}{2m}$. Hence, if $\beta \leq \frac{m}{2m-1}$, agent 1 gets at most $\frac{3m-1}{4m-2}\cdot (m-1)$ favorite items on average while if $\beta\geq \frac{m}{2m-1}$, agent 2 gets her top item with probability at most $\frac{3m-1}{4m-2}$.   

    A distortion of at least $\frac{m-1}{m-2}$ is obtained by a mechanism that first assigns agent 2 her top item and then assigns (at least) $m-2$ available favorite items to agent 1. Therefore, the distortion gap is lower-bounded by $\frac{4m-2}{3m-1}\cdot \frac{m-2}{m-1}$, which approaches $4/3$ as $m$ goes to infinity.
\end{proof}

\section{Extensions and Open Problems}\label{sec:conclusion}

Throughout this paper, our focus has been on the impact of limited access to the agents' preferences without worrying about whether the agents could strategically misreport the information requested from them (e.g., the list of their most-preferred $b_i$ items). However, it is worth noting that even if the agents are strategic, our proposed mechanisms satisfy Bayesian incentive compatibility, i.e., it is a Bayesian Nash equilibrium for every agent to report the truth. This result is similar in spirit to the results of~\citet{GPTV24} and provides stronger support for the practicality of these mechanisms.

\begin{observation}\label{obs:BIC}
All the mechanisms that we introduced are Bayesian Incentive Compatible (BIC) for any UF distribution profile ${\mathcal{F}}$.
\end{observation}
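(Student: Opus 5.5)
The plan is to reduce BIC to one structural fact shared by all our mechanisms: each agent's report enters only as the set $T_i$ of $b_i$ claimed favorites, and how an item is treated by the mechanism does not depend on the agent's own report. More precisely, I will show that for each mechanism (RS, its secretary variant, RSBS, HQL) and each agent $i$, if $i$ reports a set $T$ with $|T|=b_i$, then the probability that $i$ is assigned an item $g$ is some constant $c_i$ when $g\in T$ and $0$ when $g\notin T$. Here $c_i$ depends only on $\bb$, on the mechanism's parameters and on the identity of $i$, and not on $T$ or on $g$. The probabilities are taken over the other agents' truthful reports drawn from their UF distributions and over the mechanism's internal randomness.

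Given this fact, agent $i$ with realized values $v_i$ obtains expected utility $c_i\cdot v_i(T)$ when reporting $T$. The expected utility is linear because valuations are additive and we can apply linearity of expectation item by item. This quantity is maximized by taking $T$ to be the true top-$b_i$ set, so truthful reporting is a best response when the others report truthfully. If an agent reports fewer than $b_i$ items, only items inside the reported set can be received, so this is weakly dominated. If a mechanism instead requests full rankings, it uses only the top-$b_i$ prefix, and the same argument applies.

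The fact itself is checked separately for each mechanism, reusing the earlier probability computations.

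\begin{itemize}
\item \textbf{RS.} Conditioned on $g\in T$, agent $i$ receives $g$ with probability $p_i\,\E[1/(1+\sum_{j\neq i}X_j)]$, as in Equation~(\ref{eq:alg-2-q_it}). Under the UF property, $X_j=1$ with probability $p_jb_j/m$, independently across $j$, and this holds for every fixed item $g$. The computation therefore never used the fact that $g$ was truly $i$'s favorite, only that $g\in T$.
\item \textbf{Secretary variant of RS.} The same reasoning applies, because each item goes to a uniformly random surviving claimant.
\item \textbf{RSBS.} For $i\neq i^*$, the factors $(1-\beta_i)$ and $(1-\sigma b_{\max}/m)$ do not depend on $i$'s report, and Lemma~\ref{lem:rsbs-distortion-per-instance} already gives the constant. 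For $i^*$, the event $\mathcal{E}$ that the other agents hold $g$ at the end of phase 2 is independent of $i^*$'s report, and its probability, given in Equation~(\ref{eq:i-star-prob}), is the same for every fixed $g$.
\item \textbf{HQL.} The proof of Theorem~\ref{thm:rank} shows that the probability is $p_i\prod_{j<i}(1-p_jb_j/m)$ for every fixed $g$ in agent $i$'s reported set. This quantity does not depend on the report, since earlier agents' behavior is unaffected by $i$.
\end{itemize}

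The main obstacle is justifying that these probabilities hold for a \emph{fixed} item $g$ chosen by the deviating agent, rather than for the random $t$-th favorite item as in the earlier proofs. The key point I would verify carefully is that, under the UF property, for any fixed item $g$ the event that $g$ lies in agent $j$'s top-$b_j$ set has probability $b_j/m$ (by counting bundles). These events are independent across agents and independent of the mechanism's coins, so all earlier formulas carry over with $g$ fixed. A second subtlety is that the constant $c_i$ must be independent of which $b_i$-set is reported. The formulas do not involve the identities of the items, so this holds.
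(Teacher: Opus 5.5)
Your proposal is correct and follows essentially the same route as the paper's sketch. Both arguments use three facts: the only input is the top-$b_i$ set; by the UF property each reported item is received with the same probability regardless of which set is reported; and unreported items are received with lower probability, so the truthful set maximizes $c_i\cdot v_i(T)$. You check the per-item constant for a fixed item $g$ mechanism by mechanism, which is more careful than the paper's appeal to ex-ante symmetry. The paper additionally allows a final phase that fills quotas with non-favorite items, which requires only that such items be received with weakly smaller per-item probability rather than probability zero.
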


Another assumption that we have been making throughout this paper is that the agents' valuations are additive across items. However, all our upper bounds on the distortion and the distortion gap of different mechanisms hold even for the more general class of \emph{submodular} valuation functions. Submodular valuations allow for some items to be substitutes and satisfies $v(S\cup \{j\})-v(S)\geq v(T\cup \{j\})-v(T)$ for any sets of items $S\subseteq T\subset [m]$ $v(S)$ and any item $j\in [m] \setminus T$ that belongs to neither one. 
\begin{observation}\label{obs:submod}
All our distortion and distortion gap bounds hold even if the stochastically generated preferences of each agent $i$ are captured by a submodular valuation function $v_i(\cdot)$, as long as this satisfies the UF property, i.e., each bundle $S$ of $b_i$ items is equally likely to be the one maximizing $v_i(S)$.
\end{observation}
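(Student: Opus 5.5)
The plan is to reduce the submodular case to the additive one through Lemma~\ref{lem:template-upper}, the only place where additivity enters the distortion upper bounds. Each upper bound in the paper (Theorems~\ref{thm:rs-warm-up}, \ref{thm:alg-2}, \ref{thm:rsbs-distortion-per-instance}, \ref{thm:rank}, and the secretary variant of RS) is proved by computing, for each agent $i$ and each favorite item, a lower bound $\rho$ on the probability that the mechanism gives that item to $i$. That computation uses only two facts: the UF property, namely that a fixed item lies in agent $j$'s favorite $b_j$-set with probability $b_j/m$, and the independence of this event across agents. Neither fact depends on additivity. So I will redefine the favorite set $\tp_i$ as the maximizer of $v_i(S)$ over $|S|=b_i$, breaking ties uniformly. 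UF then gives exactly the marginals used before, and every probability computation goes through verbatim. Note that the mechanisms only query $\tp_i$, so their behavior is unchanged.

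It then remains to re-prove Lemma~\ref{lem:template-upper} in this setting. The claim is that if each item of $\tp_i$ is assigned to $i$ with probability at least $\rho$, then
\[
\E[v_i(M_i)] \geq \rho \cdot \E[v_i(\tp_i)],
\]
while $\text{OPT}(V,\bb)\leq \sum_i v_i(\tp_i)$. The second inequality follows from monotonicity, which I will assume (standard for valuations), since any bundle of size at most $b_i$ is worth at most $v_i(\tp_i)$.

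For the first inequality I will condition on the realized valuations, hence on $\tp_i$. The set $M_i\cap \tp_i$ is then a random subset of $\tp_i$ in which each element appears with probability at least $\rho$. The key step is the standard sampling lemma: for a nonnegative monotone submodular $f$ and a random $R\subseteq T$ with $\Pr[j\in R]\geq \rho$ for every $j$, we have $\E[f(R)]\geq \rho f(T)$. I will prove it by ordering $T=\{j_1,\dots,j_k\}$ and writing
\[
f(R)\geq \sum_{\ell} \mathbf{1}[j_\ell\in R]\cdot\bigl(f(\{j_1..j_\ell\})-f(\{j_1..j_{\ell-1}\})\bigr).
\]
This holds by telescoping over $R$'s elements together with submodularity: each element's marginal gain relative to its predecessors in $R$ is at least its marginal gain relative to all of its predecessors in $T$. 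Taking expectations then gives $\rho f(T)$, using $f(\emptyset)\geq 0$.

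The main obstacle I expect is a conditioning subtlety. The per-item probability $\rho$ in the paper is unconditional, averaged over the realization of $\tp_i$. The sampling lemma, by contrast, needs the bound conditional on $i$'s valuation, or at least needs it to hold in a way compatible with the correlation between $v_i$ and the allocation. I will argue that in every mechanism the event "$i$ gets item $g\in\tp_i$" depends on $i$'s data only through the identity of $g$. By the symmetry of the other agents' UF distributions, its probability is the same for every $g$ and every realization of $v_i$. Hence the bound $\rho$ holds conditionally on $v_i$, and the argument closes. The distortion gap bounds then follow unchanged, since the benchmark of Theorem~\ref{thm:benchmark} used additive (hence submodular) instances and remains a valid lower bound.
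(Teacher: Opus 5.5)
Your argument is correct, and at the top level it matches the paper's. The paper also proves the observation by re-establishing Lemma~\ref{lem:template-upper} for submodular valuations and reusing every per-item probability computation verbatim. The lemma says that if each item of the favorite bundle $S_i^*$ reaches $i$ with probability at least $\rho$, then $\E[v_i(M_i)]\geq \rho\,\E[v_i(S_i^*)]$, while $\text{OPT}\leq\sum_i v_i(S_i^*)$.

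The difference lies in how the key inequality is justified. The paper dispatches it in one line by appealing to ``standard properties of submodularity and its multilinear extension.'' You instead prove directly, by a telescoping chain argument, that $\E[f(R)]\geq\rho f(T)$ for any monotone submodular $f$ and any random $R\subseteq T$ with marginals at least $\rho$, however correlated. This is essentially the argument behind the Lov\'asz-extension (convex-closure) lower bound.

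Your route is the more precise one here. In all the mechanisms, the items an agent receives are strongly correlated: through the survivor, burning and stealing coins, the single coin per agent in HQL, and overlaps with other agents' favorite sets. The multilinear extension, by contrast, describes independent inclusion. For correlated $R$ one can have $\E[f(R)]$ strictly below the multilinear value; the all-or-nothing distribution is an example.

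You also make explicit two points the paper's sketch leaves implicit. First, $\rho$ must hold conditionally on $v_i$; you correctly obtain this from independence across agents and the UF marginals $b_j/m$. Second, monotonicity is needed both in the chain argument and for $\text{OPT}\leq\sum_i v_i(S_i^*)$.

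One bookkeeping fix is needed. Write $\Delta_\ell$ for the marginal gain of $j_\ell$ over $\{j_1,\dots,j_{\ell-1}\}$ and keep the $f(\emptyset)$ term in your bound, $f(R)\geq f(\emptyset)+\sum_\ell \mathbf{1}[j_\ell\in R]\,\Delta_\ell$. Taking expectations then gives $\rho f(T)+(1-\rho)f(\emptyset)\geq\rho f(T)$. Dropping $f(\emptyset)$ first, as written, only yields $\rho\,(f(T)-f(\emptyset))$.
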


We briefly justify both of the observations above in Appendix~\ref{sec:observ}.

In terms of interesting problems, the most intriguing one is whether a distortion gap of $1$ is possible by some relatively simple (and, ideally, one-pass) mechanism. 
An intermediate challenging question is whether our distortion lower bound of $\left(1-\prod_{i\in [n]}{(1-b_i/m)}\right)^{-1}$ is tight for every $\bb$-matching instance. Another interesting direction would be to extend our setting by allowing agents draw their valuations from non-UF distribution profiles. On the other hand, it is impossible for any prior-independent mechanism to achieve any non-trivial results if each value $v_{i,j}$ is drawn from a distribution $F_{i,j}$ that is both agent-specific and item-specific (this would allow for all the value to be concentrated in a single $v_{ij}$ value that the mechanism cannot detect). Are there interesting intermediate settings where low distortion bounds can be achieved via prior-independent mechanisms?

\section*{Acknowledgements}
Caragiannis and Homrighausen were supported by Independent Research Fund Denmark (DFF) under grant 2032-00185B. Gkatzelis was supported by NSF CAREER award CCF-2047907 and NSF grant CCF-2210502.

\bibliographystyle{plainnat}
\bibliography{references,references2}
\appendix

\newpage
\section{Observations from Section~\ref{sec:conclusion}}\label{sec:observ}

We provide proof sketches for Observations~\ref{obs:BIC} and \ref{obs:submod} from Section~\ref{sec:conclusion}.

\subsection*{Proof Sketch of Observation~\ref{obs:BIC} (Bayesian Incentive Compatibility)}
To verify that our mechanisms are all Bayesian incentive compatible note that the input they take from each agent $i$ is just the (unordered) set $S_i$ of their top $b_i$ items. Also, note that for each agent $i$, it is equally likely for their favorite bundle $S_i$ to be any bundle of $b_i$ items, so the demand for all items is ex-ante symmetric. Given this symmetry, each of our mechanisms yields a probability $q_{ij}$ with which agent $i$ receives each item $j\in S_i$, which is the same for all $j\in S_i$ (the allocation is determined only by the bundles reported by the agents). Similarly, the probability $q_{ij}$ that $i$ receives an item $j\notin S_i$ is the same for all $j\notin S_i$. Therefore, all that we need to observe is that $i$'s probability of receiving one of their favorite items (one in $S_i$) is higher than the probability of receiving any other item (not in $S_i$). This is easy to verify, since our mechanisms allocate all the ``favorite'' items first and then arbitrarily allocate the remaining items (in fact, our distortion and distortion gap bounds hold even if the mechanisms allocate none of the non-favorite items).

\subsection*{Proof Sketch of Observation~\ref{obs:submod} (Extention to Submodular Valuations)}
To verify the fact that our distortion and distortion gap upper bounds extend to submodular valuations, we adapt the proof of Lemma~\ref{lem:template-upper} as follows:

\begin{lemma}
Consider a matching mechanism $\mathcal{M}$ applied on a $\bb$-matching instance with an underlying profile of submodular valuation functions $(v_i)_{i\in [n]}$ drawn from UF distribution $\mathcal{F}$. For $i\in [n]$ and $t\in [b_i]$, let $q_{ij}$ be the probability that $\mathcal{M}$ matches agent $i$ with an item $j$ in their most valuable bundle, $S^*_i$, and let $\rho=\min_{i\in [n], j\in S^*_i}{q_{ij}}$. Then, algorithm $\mathcal{M}$ has distortion at most $1/\rho$.
\end{lemma}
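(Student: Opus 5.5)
The plan is to copy the proof of Lemma~\ref{lem:template-upper}. The additive step "agent $i$ collects $q_{it}\cdot v_i^{(t)}$ from each favorite item" will be replaced by a statement about random subsets of $S^*_i$ whose elements are each present with probability at least $\rho$. The tool for that step is the convex-closure (Lovász extension) characterization of submodular functions. Throughout, I assume each $v_i$ is monotone, submodular and normalized, with $v_i(\emptyset)=0$, as is standard for valuations.

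\emph{Upper bound on OPT.} For any feasible $\bb$-matching, agent $i$ receives a bundle of at most $b_i$ items. By monotonicity and the definition of $S^*_i$, its value is at most $v_i(S^*_i)$. Hence $\text{OPT}(V,\bb)\le\sum_{i\in[n]}v_i(S^*_i)$ pointwise, and therefore also in expectation.

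\emph{Lower bound on the mechanism.} Fix an agent $i$ and condition on the realized valuation $v_i$, which determines $S^*_i$. Let $M_i$ be the bundle the mechanism gives to $i$, and let $R=M_i\cap S^*_i$. By monotonicity, $v_i(M_i)\ge v_i(R)$. I will use the bound $q_{ij}\ge\rho$ for $j\in S^*_i$ in its conditional form: $\Pr[j\in R\mid v_i]\ge\rho$. This holds for all our mechanisms. Their allocation to $i$ depends only on the reported set $S^*_i$ and on the other agents' reports, and those reports are independent of $v_i$. By the symmetry used in the original analyses, the computed probabilities are the same conditional on any realized favorite bundle.

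Let $x\in[0,1]^m$ be the vector of marginals of $R$; then $x_j\ge\rho$ on $S^*_i$ and $x_j=0$ elsewhere. For a submodular $f$, the Lovász extension $f^L$ coincides with the convex closure. In other words, $f^L(x)$ is the minimum of $\E[f(R')]$ over all random sets $R'$ with marginals $x$, whatever the correlation between elements. Hence
\[\E[v_i(R)\mid v_i]\;\ge\; v_i^L(x)\;=\;\E_{\theta\sim U[0,1]}\bigl[v_i(\{j: x_j\ge\theta\})\bigr]\;\ge\;\rho\cdot v_i(S^*_i).\]
The last inequality holds because, for $\theta\le\rho$, the threshold set contains all of $S^*_i$. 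Here I also use monotonicity and $v_i\ge 0$ for $\theta>\rho$.

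\emph{Conclusion.} Taking expectations over $v_i$ and summing over $i$ by linearity gives $\E[\text{SW}]\ge\rho\,\E\bigl[\sum_i v_i(S^*_i)\bigr]\ge\rho\,\E[\text{OPT}]$. Hence the distortion is at most $1/\rho$, exactly as in Lemma~\ref{lem:template-upper}.

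The main obstacle is the correlation step. The events $j\in M_i$ for different $j\in S^*_i$ are correlated; under RSBS, for example, the burning and stealing phases remove whole bundles at once. So one cannot simply apply per-item marginal gains, and the convex-closure argument above is precisely what handles arbitrary correlation. A secondary point to verify is the conditional form of the marginal bounds: the quantities $q_{it}$ computed in Theorems~\ref{thm:alg-2}, \ref{thm:rank} and Lemma~\ref{lem:rsbs-distortion-per-instance} must remain valid conditional on agent $i$'s own valuation. I expect this to follow from the independence across agents and the UF property of the other agents' distributions.
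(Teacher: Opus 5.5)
Your proof is correct. It follows the paper's skeleton: bound $\text{OPT}$ by $\sum_i v_i(S^*_i)$, keep only the favorite items $M_i\cap S^*_i$, and show that a random subset of $S^*_i$ containing each element with probability at least $\rho$ is worth at least $\rho\, v_i(S^*_i)$ in expectation. Where you differ is the tool for that last step, and yours is the more appropriate one. The paper writes this value as $\sum_{S\subseteq S^*_i} q_{iS}\, v_i(S)$ for the joint distribution $q_{iS}$ of $M_i\cap S^*_i$ and cites properties of the multilinear extension. Taken literally, that extension is the expectation under \emph{independent} inclusion, and a positively correlated set with the same marginals can be worth strictly less. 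For example, take $f(S)=\min\{|S|,1\}$ on two items with marginals $1/2$: it is worth $3/4$ under independent inclusion but only $1/2$ when the two items are perfectly correlated. The mechanisms' allocations are correlated in exactly this way: the survivor coin in RS, the single coin in HQL, and the burning of a whole assignment in RSBS. Your convex-closure argument bounds $\E[f(R)]$ over every joint distribution with the given marginals, and for submodular functions the convex closure equals the Lovász extension, so it handles this directly. The threshold formula then gives $\rho\, v_i(S^*_i)$ using only nonnegativity, since the threshold set is exactly $S^*_i$ for $0<\theta\le\rho$. Ordering $S^*_i$ and telescoping marginal gains with submodularity and monotonicity gives the same bound more elementarily. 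You also surface two assumptions the paper leaves implicit. The first is monotonicity, which is needed for $v_i(M_i)\ge v_i(M_i\cap S^*_i)$ and for $\text{OPT}\le\sum_i v_i(S^*_i)$, because $S^*_i$ is optimal only among bundles of size exactly $b_i$. The second is that $q_{ij}\ge\rho$ must hold conditionally on $v_i$. This is how the paper's $q_{iS}$ sit inside the expectation, and it holds for RS, RSBS and HQL by independence across agents and the UF property. The paper's version buys brevity; yours is self-contained and valid for arbitrary correlation.
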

\begin{proof}
Consider a $\bb$-matching instance with $m$ items and $n$ agents with quota $b_i\geq 1$ for each agent $i\in [n]$ such that $\sum_{i\in [n]}{b_i}=m$ and a randomly generated submodular valuation function $v_i(\cdot)$ such that any bundle $S$ of $b_i$ items is equally likely to maximize $v_i(S)$. Let $S^*_i = \arg\max_{S \subseteq [m]: |S|=b_i}\{v_i(S)\}$ be the favorite bundle of $b_i$ items for agent $i$ according to this valuation. Let $q_{iS}$ denote the probability that each bundle $S\subseteq S^*_i$ is assigned to agent $i$.

Consider any mechanism $\mathcal{M}$ such that for every $j\in S^*$ we have $\sum_{j \ni S\subseteq S^*}q_{iS} \geq \rho$. Then, considering the value each agent gets from their favorite items that mechanism $\mathcal{M}$ assigns to them, the expected social welfare of the matching computed by algorithm $\mathcal{M}$ is
\begin{align*}
    \E_{V\sim \mathcal{F}}\left[SW(\mathcal{M}(P(V),\bb),V)\right] &\geq \E\left[\sum_{i\in [n]}{\sum_{S \subseteq S^*_i}{q_{iS}\cdot v_i(S)}}\right] \\
    &\geq  \rho \cdot \E\left[\sum_{i\in [n]} v_i(S^*_i)\right]\\
    &\geq \rho \cdot \E_{V\sim \mathcal{F}}[OPT(V,\bb)].
\end{align*}
The second inequality in the above derivation follows by standard properties of submodularity and its multilinear extension~\citep{CCPV11} and from the definition of $\rho$. The third one since no $\bb$-matching
can extract more value than $v_i(S^*_i)$ for agent $i$. 
\end{proof}

\end{document}